\newtheoremstyle{stylename}
  {\z@}
  {\z@}
  {}
  {\parindent}
  {\itshape}
  {:}
  {5\p@ plus\p@ minus\p@\relax}
  {\thmname{#1}\thmnumber{ #2}\thmnote{ (#3)}}
\def\QED{\mbox{\rule[0pt]{1.3ex}{1.3ex}}}
\newlength{\myparindent}
\theoremstyle{stylename}
\newtheorem{theorem}{Theorem}
\newtheorem{corollary}[theorem]{Corollary}
\newtheorem{lemma}[theorem]{Lemma}
\newtheorem{proposition}[theorem]{Proposition}
\newtheorem{definition}{Definition}
\newtheorem{problem}{Problem}
\newcommand{\myproofname}{Proof}
\renewcommand{\Re}{\mathbb{R}}
\newcommand{\Co}{\mathbb{C}}
\newcommand{\NNb}{\mathbb{N}}
\newcommand{\ZZb}{\mathbb{Z}}
\newcommand{\QQb}{\mathbb{Q}}
\newcommand{\nxn}{{n\times n}}
\newcommand{\mxn}{{m\times n}}
\newcommand{\diff}{\mathrm{d}}
\newcommand{\calC}{\mathcal{C}}
\newcommand{\calM}{\mathcal{M}}
\newcommand{\calR}{\mathcal{R}}
\newcommand{\bitsize}{\mathfrak{b}}
\newcommand{\vecc}{\mathrm{vec}}
\newcommand{\euler}{\mathrm{e}}
\newcommand{\pred}{\tilde{p}}
\newcommand{\pext}{p^{\text{ext}}}
\newcommand{\mext}{m^{\text{ext}}}
\newcommand{\Vred}{\tilde{V}}
\newcommand{\Vext}{V^{\text{ext}}}
\newcommand{\pdisc}{\hat{p}}
\newcommand{\remain}{\mathrm{rem}}
\newlength\algoindent
\newcommand{\new}{\color{blue}}
\newcounter{changes}
\newcommand{\extended}[1]{{\color{orange}#1}}
\newcommand{\short}[1]{{\color{ForestGreen}#1}}
\renewcommand{\extended}[1]{#1}
\renewcommand{\short}[1]{#1}
\renewcommand{\short}[1]{}
\title{\LARGE\bfseries Complexity of the LTI system trajectory boundedness problem}
\author{Guillaume O.\ Berger and Rapha\"el M.\ Jungers%
\thanks{GB is a FRIA (F.R.S.--FNRS) fellow.
RJ is a FNRS honorary Research Associate.
This project has received funding from the European Research Council under the European Union's Horizon 2020 research and innovation program under grant agreement No.\ 864017 -- L2C.
RJ is also supported by the Walloon Region, the Innoviris Foundation, and the FNRS (Chist-Era Druid-net).
Both are with ICTEAM institute, UCLouvain, Belgium.
Emails: \{guillaume.berger,raphael.jungers\}@uclouvain.be}}
\begin{document}

\maketitle
\thispagestyle{empty}
\pagestyle{empty}

\begin{abstract}
We study the algorithmic complexity of the problem of deciding whether a Linear Time Invariant dynamical system with rational coefficients has bounded trajectories.
Despite its ubiquitous and elementary nature in Systems and Control, it turns out that this question is quite intricate, and, to the best of our knowledge, unsolved in the literature.
We show that classical tools, such as Gaussian Elimination, the Routh--Hurwitz Criterion, and the Euclidean Algorithm for GCD of polynomials indeed allow for an algorithm that is polynomial in the bit size of the instance.
However, all these tools have to be implemented with care, and in a non-standard way, which relies on an advanced analysis.
\end{abstract}

\section{Introduction}\label{sec-introduction}

This paper deals with the computational problem of deciding whether a Linear Time Invariant (LTI) dynamical system with rational coefficients has bounded trajectories; see Problems 1 and 2 in Section \ref{sec-problem-results}.
We show that this problem can be solved in polynomial time with respect to the bit size of the coefficients.
We are interested in the \emph{exact complexity}, also called ``bit complexity'' or ``complexity in the Turing model'', which accounts for the fact that arithmetic operations ($+$, $-$, $\times$, $\div$) on integers and rational numbers take a time proportional to the bit size of the operands.

Rational matrices appear in many applications, including combinatorics, computer science and information theory; for instance, the number of paths of length $r$ in a graph (involved for instance in the computation of its entropy \cite{lind1995anintroduction}) grows at most as $\rho^r$ ($\rho\geq0$) if and only if the adjacency matrix of the graph divided by $\rho$ has bounded powers.
However, despite its ubiquitous and paradigmatic nature for many applications, it seems that the question of the complexity of the problem of deciding whether the trajectories of a LTI system with rational coefficients are bounded is unsolved in the literature.
The aim of this paper is to fill this gap by providing a proof of its polynomial complexity.


\begin{figure}
\centering
\includegraphics[width=\linewidth,trim=12pt 5pt 16pt 8pt, clip]{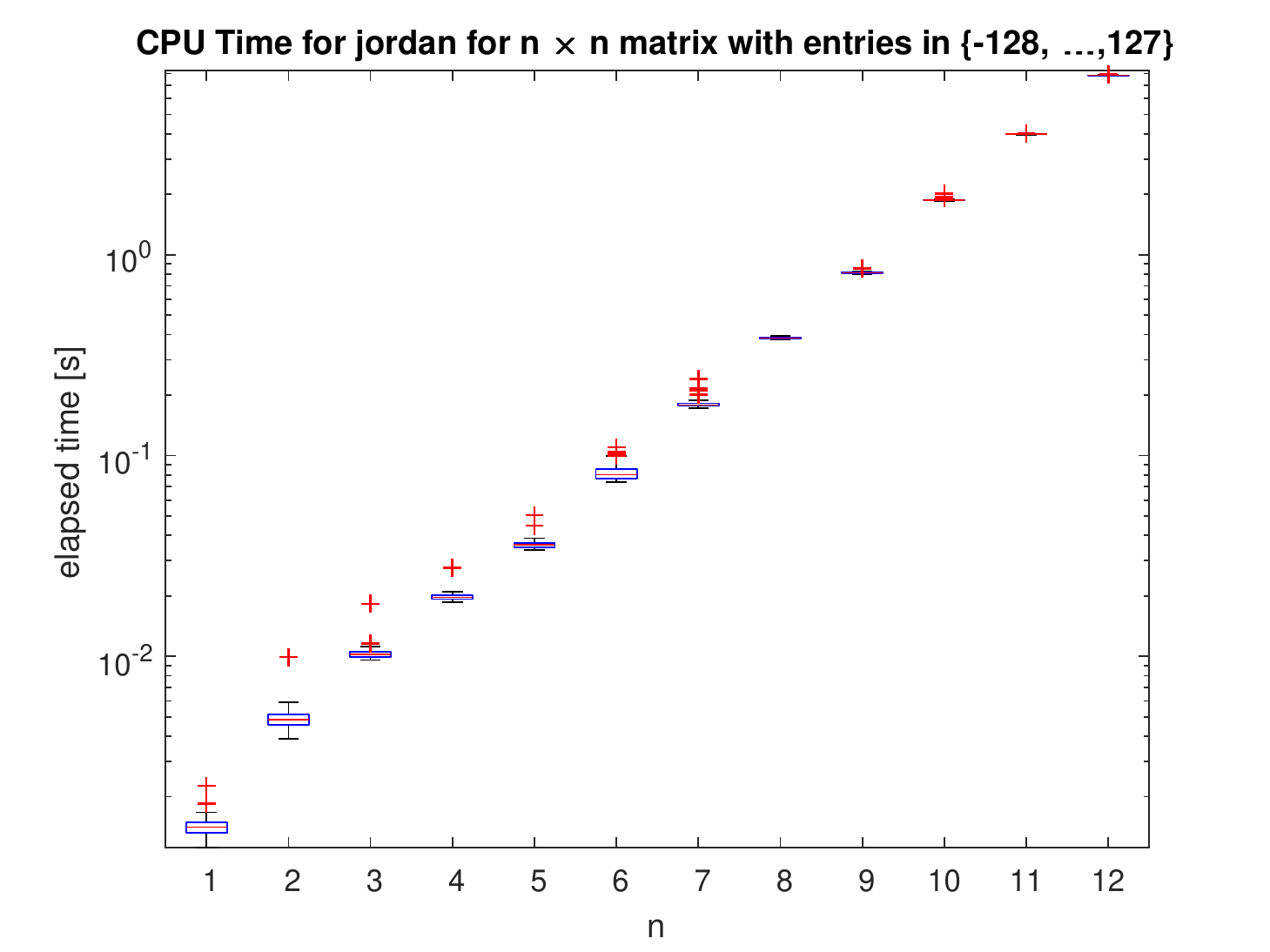}
\caption{A naive way to decide the boundedness of the trajectories of a LTI system described by a rational matrix would be to use the Matlab function \texttt{jordan} and look at the eigenvalues and size of the associated Jordan blocks.
However, this method will not be efficient for two reasons.
Firstly, the complexity of the Matlab function \texttt{jordan} applied on symbolic integer matrices seems \emph{super-linear}.
In the above plot, we have generated, for each $n\in\{1,\ldots,12\}$, $50$ random $n\times n$ matrices with $8$-bit integer entries (the bit size of the input is thus $8n^2$) and we measured the time to compute the Jordan form (represented by the boxplots).
The empirical complexity is clearly super-linear.
Secondly, the obtained Jordan form does not contain explicitly the eigenvalues of the matrix, but rather expressions of the form \texttt{root('some polynomial')}, so that extra computation is needed to decide the (marginal) stability of each of the Jordan blocks.
In this work, we show that a more efficient implementation is possible.}
\label{fig-jordan-matlab}
\end{figure}

The question of deciding \emph{asymptotic stability} (rather than boundedness of the trajectories) of LTI systems has received a lot of attention in the literature \cite{routh1877treatise,hurwitz1895ueber,lienard1914surlesigne}.
Algorithms with polynomial bit complexity have been proposed to address this question for systems with rational coefficients.
This includes algorithms based on the \emph{Routh--Hurwitz stability criterion} \cite{pena2004characterizations} where care is taken to avoid exponential blow-up of the bit size of the intermediate coefficients; or algorithms based on the resolution of the \emph{Lyapunov equation} \cite{bartels1972algorithm}, again taking care to avoid exponential blow-up of the bit size of the intermediate steps of the resolution.

However, these algorithms focus on the asymptotic stability and do not extend straightforwardly for the problem of deciding boundedness of the trajectories.
For instance, the Routh--Hurwitz stability criterion applied on the characteristic polynomial of a matrix allows to decide whether all eigenvalues of the matrix have negative real part.
For the problem of boundedness of the trajectories, the analysis is more difficult because eigenvalues with nonnegative real parts are also allowed%
\footnote{This holds for the continuous-time case, but a similar result holds for the discrete-time case.
Both cases are studied in this paper.}%
, provided they are on the imaginary axis and correspond to Jordan blocks of size one in the Jordan normal form of the system matrix.
Extensions of the Routh--Hurwitz criterion to compute the \emph{number} of roots with negative real part of a given polynomial have also been proposed in the literature (see, e.g., \cite[\S15]{gantmacher2000thetheory}, \cite{chen1993new,choghadi2013routh}), but the analysis of the bit complexity of these algorithms remains elusive.
The situation is similar for the approach based on the Lyapunov equation \cite{bartels1972algorithm}.
More precisely, while for the study of the asymptotic stability, a solution of the Lyapunov matrix inequality can be computed by arbitrarily fixing the right-hand side term to $-I$, this trick cannot be used for the analysis of the boundednes of the trajectories, since the RHS term is not guaranteed to be negative definite.
One has thus to solve a matrix \emph{inequality} instead of a matrix \emph{equation}, and there is to the best of the authors' knowledge no clear result available in the literature on the bit complexity of solving LMIs, so that an extension of the Lyapunov method for the problem of trajectory boundedness is not straightforward.

\emph{Objectives and methodology.}
The discussion above nevertheless suggests an algorithmic procedure for our problem, consisting in
\begin{enumerate}
\item computing the minimal polynomial of the system matrix, which contains the information on the eigenvalues and on the size of the largest associated Jordan block in the Jordan normal form of the system matrix;
\item using an extension of the Routh--Hurwitz criterion to decide whether all the roots of the minimal polynomial either have negative real part, or are on the imaginary axis and correspond to Jordan blocks of size one.
\end{enumerate}
We provide self-contained proofs that these two steps can be achieved in polynomial time with respect to the bit size of the matrix.
In particular, we provide a careful analysis of the extended Routh--Hurwitz criterion, showing that it provides a polynomial bit-complexity algorithm for the second step.%

\emph{Comparison with the literature.}
Algorithms with polynomial bit complexity for computing the minimal polynomial of a rational matrix have been proposed in the literature \cite{dumas2005efficient,dumas2006bounds}.%
\footnote{Note that the Matlab function \texttt{jordan} would not have helped us for this problem, as the complexity of this function seems to be super-linear in the bit size of the instance; see, e.g., Figure \ref{fig-jordan-matlab}.}
While being aware of these results, we describe here an \emph{elementary} algorithm for this problem, based on the defining property of the minimal polynomial (see Subsection \ref{ssec-minimal-polynomial}).
The proposed algorithm is likely to be less efficient than those available in the literature, but its elementary nature allows us to provide a simple self-contained proof of its polynomial bit complexity.

As for the second step, several extensions of the Routh--Hurwitz criterion have been proposed in the literature to compute for a given polynomial the number of roots on the imaginary axis and their multiplicity \cite{gantmacher2000thetheory,chen1993new,choghadi2013routh}.
However, no proofs of the polynomial complexity of these algorithms are provided.
In particular, since they are extensions of the classical Routh--Hurwitz criterion, there is no guarantee on the boundedness of the bit size of the intermediate coefficients; see, e.g., \cite[p.\ 321]{pena2004characterizations} for a discussion on the ``bit size growth factor'' for the Routh--Hurwitz criterion.
The extended Routh--Hurwitz criterion proposed in this paper draws on these results and combines them with techniques introduced in the context of the classical Routh--Hurwitz criterion to avoid ``bit-size blow-up''.
This results in a sound elementary algorithm to address the second step, and for which we provide a simple self-contained proof of the polynomial bit complexity.

\emph{Outline.}
The paper is organized as follows.
The statement of the problems and the main results are presented in Section \ref{sec-problem-results}.
Some preliminary results, namely on the computation of the determinant and the resolution of systems of linear equations, are presented in Section \ref{sec-preliminaries}.
Then, in Section \ref{sec-proof-thm-boundedness-cont}, we present the proof of the main result for continuous-time systems.
Finally, in Section \ref{sec-proof-thm-boundedness-disc}, we present the proof of the main result for discrete-time systems.

Many of the intermediate results used in our analysis are inspired from classical results, but have been adapted for the needs of this work.
Below, we refer to these results as \emph{folk theorems}, meaning that we are referring to a classical result --- possibly slightly adapted to our needs, but on which we do not claim any paternity.


\emph{Notation.}
We use a Matlab-like notation for the indexing of submatrices; e.g., $A_{[1:r,:]}$ denotes the submatrix consisting of the $r$ first rows of the matrix $A$.
The degree of a polynomial $p$ is denoted by $\deg\,p$.
We use $i$ to denote the imaginary unit $i=\sqrt{-1}$ \emph{and} as an index $i\in\NNb$, but the disambiguation should be clear from the context.

\section{Problem statement and main results}\label{sec-problem-results}

We start with the definition of bit size for integers, integer matrices and rational matrices.

\begin{definition}[Bit size]
\begin{itemize}
\item The \emph{bit size} of an integer $a\in\ZZb$ is $\bitsize(a)=\lceil\log_2(\lvert a\rvert+1)\rceil+1$ ($=$ smallest $b\in\ZZb_{\geq0}$ such that $-2^{b-1}+1\leq a\leq 2^{b-1}-1$).
\item The \emph{bit size} of an integer matrix $A=(a_{ij})_{i=1,j=1}^{m,n}\in\ZZb^\mxn$ is $\bitsize(A)=\sum_{i=1,j=1}^{m,n}\bitsize(a_{ij})$.
\item The \emph{bit size} of a rational matrix $A\in\QQb^\mxn$, described by $A=B/q$ with $B\in\ZZb^\mxn$ and $q\in\ZZb_{>0}$, is $\bitsize(A)=\bitsize(B)+\bitsize(q)$.
\end{itemize}\vskip0pt
\end{definition}

We consider the following decision problem, accounting for the boundedness of the trajectories of continuous-time LTI systems:

\smallskip

\noindent\fbox{\parbox{\dimexpr\linewidth-2\fboxsep-2\fboxrule}{\setlength{\parindent}{\dimexpr\myparindent-\fboxsep-\fboxrule}%
\begin{problem}\label{prob-boundedness-cont}
Given a rational matrix $A\in\QQb^\nxn$, decide whether $\smash{\sup_{t\in\Re_{\geq0}}}\,\lVert e^{At}\rVert<\infty\rule[-4pt]{0pt}{0pt}$.
\end{problem}%
}}

\smallskip


The following theorem states that Problem \ref{prob-boundedness-cont} can be solved in polynomial time with respect to the bit size of the input.

\begin{theorem}\label{thm-boundedness-cont}
There is an algorithm that, given any $A\in\QQb^\nxn$, gives the correct answer to Problem \ref{prob-boundedness-cont}, and whose bit complexity is polynomial in $\bitsize(A)$.
\end{theorem}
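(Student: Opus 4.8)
\emph{Overview.} The plan is to reduce Problem~\ref{prob-boundedness-cont} to a root‑location question about a single rational polynomial of degree at most $n$ --- namely the minimal polynomial of $A$ --- and then to settle that question with bit‑size‑controlled versions of the Euclidean algorithm, the Routh--Hurwitz criterion and Sturm's method. Note that $n^2\le\bitsize(A)$, so ``polynomial in $n$ and $\bitsize(A)$'' and ``polynomial in $\bitsize(A)$'' are the same requirement.

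\emph{Step 1: reduction to the minimal polynomial.} I would first record the following folk characterization, read off from the Jordan normal form of $A$: writing a Jordan block of size $k$ for an eigenvalue $\lambda$ as $J=\lambda I+N$, one has $e^{Jt}=e^{\lambda t}\sum_{0\le j<k}\tfrac{t^j}{j!}N^j$, whose norm stays bounded as $t\to\infty$ precisely when $\realp\lambda<0$, or $\realp\lambda=0$ and $k=1$. Since the size of the largest Jordan block attached to $\lambda$ equals the multiplicity of $\lambda$ as a root of the minimal polynomial $\mu_A$, this gives: $\sup_{t\in\Re_{\ge0}}\lVert e^{At}\rVert<\infty$ if and only if every root of $\mu_A$ has non‑positive real part and every root of $\mu_A$ lying on the imaginary axis is simple. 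It thus suffices to compute $\mu_A$ and then to decide this property.

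\emph{Step 2: computing $\mu_A$ in polynomial bit complexity.} I would use the defining property of $\mu_A$ as the monic polynomial of least degree annihilating $A$: compute the powers $I,A,A^2,\dots$ (only exponents up to $n$ are needed, and each such power has bit size polynomial in $\bitsize(A)$), determine by Gaussian elimination the smallest $k$ for which $A^k$ lies in the linear span of $I,A,\dots,A^{k-1}$ inside $\QQb^{n^2}$, and recover the coefficients of $\mu_A$ from the associated linear system. The preliminary results of Section~\ref{sec-preliminaries} on determinants and on the resolution of rational linear systems guarantee that all intermediate quantities, and the coefficients of $\mu_A$ once denominators are cleared, have bit size polynomial in $\bitsize(A)$.

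\emph{Step 3: the extended Routh--Hurwitz test.} Let $p$ be $\mu_A$ cleared of denominators, an integer polynomial of degree at most $n$ and bit size polynomial in $\bitsize(A)$. I want to decide whether $p$ has no root in the open right half‑plane and only simple roots on the imaginary axis. For this I would run an extended Routh table for $p$: whenever the table produces a premature row of zeros, the auxiliary polynomial $\phi$ read off from the preceding row is a factor of $p$ all of whose roots are symmetric about the origin, and one continues the table with the coefficients of $\phi'$ in place of the zero row. The target property then amounts to (a) the first column of the completed table exhibiting no sign change (no open‑RHP root), and (b) each auxiliary factor $\phi$ extracted along the way being squarefree with all its roots on the imaginary axis --- the latter reducing, via the fact that such a $\phi$ has the form $s^{\varepsilon}\psi(s^2)$ with $\varepsilon\in\{0,1\}$, to $\gcd(\phi,\phi')=1$ together with the assertion that $\psi$ has only negative real roots, which I would certify with a Sturm sequence and a sign check. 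Crucially, the Euclidean gcds, the Routh table and the Sturm sequences are all carried out in a fraction‑free, subresultant‑like form, so that --- exactly as in the classical bit‑complexity analysis of the Routh--Hurwitz criterion --- every intermediate integer stays of bit size polynomial in $\bitsize(p)$; since the whole procedure performs polynomially many arithmetic operations on such integers, its bit complexity is polynomial.

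\emph{Main obstacle.} The correctness of Step~3 is essentially classical (building on the cited extensions of Routh--Hurwitz); the real difficulty, and where I expect the bulk of the work, is the bit‑complexity bookkeeping: one must organize the auxiliary‑polynomial substitutions, the recursion on the factors $\phi$, and the Sturm computations so that no exponential blow‑up of the coefficients occurs, which requires the fraction‑free formulation above together with Hadamard‑type bounds on the relevant subresultant determinants.
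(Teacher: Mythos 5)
Your proposal follows essentially the same route as the paper: reduce to the boundedness property of the minimal polynomial via the Jordan-block characterization, compute the minimal polynomial by solving the linear systems $A^d+c_1A^{d-1}+\dots+c_dI=0$ with bit-size-controlled linear algebra, and then decide the root-location question by an extended Routh--Hurwitz test in which the premature termination of the remainder sequence isolates the imaginary-axis factor, whose squarefreeness and real-rootedness (after the change of variable) are certified by a Sturm-type argument. The only substantive difference is that where you appeal generically to a ``fraction-free, subresultant-like'' implementation to control coefficient growth, the paper makes this concrete by expressing the signs of the leading coefficients of the remainder sequence as Hurwitz determinants (Eq.~\eqref{eq-coeffs-Hurwitz-determinants}) evaluated with Bareiss' fraction-free algorithm (Proposition~\ref{pro-compute-determinant}) --- which is exactly the bookkeeping you correctly identify as the main remaining obstacle.
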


The proof of Theorem \ref{thm-boundedness-cont} is presented in Section \ref{sec-proof-thm-boundedness-cont}.
Note that a rational matrix $A=B/q$, with $B\in\ZZb^\nxn$ and $q\in\Re_{>0}$, is a positive instance of Problem \ref{prob-boundedness-cont} if and only if $B$ is a positive instance of Problem \ref{prob-boundedness-cont}.
Hence, in Section \ref{sec-proof-thm-boundedness-cont}, we limit ourselves to proving Theorem \ref{thm-boundedness-cont} for integer matrices.

The same kind of results can be obtained for the problem of the boundedness of the trajectories of discrete-time LTI systems:

\smallskip

\noindent\fbox{\parbox{\dimexpr\linewidth-2\fboxsep-2\fboxrule}{\setlength{\parindent}{\dimexpr\myparindent-\fboxsep-\fboxrule}%
\begin{problem}\label{prob-boundedness-disc}
Given a rational matrix $A\in\QQb^\nxn$, decide whether $\smash{\sup_{t\in\ZZb_{\geq0}}}\,\lVert A^t\rVert<\infty\rule[-4pt]{0pt}{0pt}$.
\end{problem}%
}}

\smallskip

\begin{theorem}\label{thm-boundedness-disc}
There is an algorithm that, given any $A\in\QQb^\nxn$, gives the correct answer to Problem \ref{prob-boundedness-disc}, and whose bit complexity is polynomial in $\bitsize(A)$.
\end{theorem}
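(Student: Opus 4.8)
The plan is to reduce the discrete-time problem to the already-solved continuous-time problem (Theorem \ref{thm-boundedness-cont}) by mimicking the relationship between the two settings. Recall that $\sup_{t\in\ZZb_{\geq0}}\lVert A^t\rVert<\infty$ if and only if the spectral radius of $A$ is at most $1$ and every eigenvalue $\lambda$ with $\lvert\lambda\rvert=1$ corresponds only to Jordan blocks of size one — equivalently, every such $\lambda$ is a simple root of the minimal polynomial of $A$. This is the exact discrete-time analogue of the characterization used in the continuous-time case (eigenvalues in the open left half-plane, plus imaginary-axis eigenvalues appearing only in size-one Jordan blocks). So the same two-step strategy applies: (i) compute the minimal polynomial $p$ of $A$ — which, as explained in Subsection \ref{ssec-minimal-polynomial}, can be done in polynomial bit complexity; (ii) decide whether all roots of $p$ lie in the closed unit disk and those on the unit circle are simple. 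The only genuinely new ingredient is carrying out step (ii) within polynomial bit complexity.

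For step (ii) I would use a Möbius transformation to map the closed unit disk to the closed left half-plane. Concretely, if $p$ has degree $d$, set $q(s) = (s-1)^d\, p\!\left(\tfrac{s+1}{s-1}\right)$; this is a polynomial in $s$ of degree $\leq d$ with the property that $\lambda$ is a root of $p$ inside (resp. on the boundary of) the unit disk if and only if the corresponding $s=\tfrac{\lambda+1}{\lambda-1}$ is a root of $q$ in the open left half-plane (resp. on the imaginary axis), with matching multiplicities; the eigenvalue $\lambda=1$ is handled separately by inspecting the drop in degree. One must check that the coefficients of $q$ have bit size polynomial in that of $p$: expanding $(s-1)^d p(\tfrac{s+1}{s-1})$ as $\sum_k p_k (s+1)^k (s-1)^{d-k}$ shows each coefficient of $q$ is an integer combination of the $p_k$ with binomial-coefficient weights, so $\bitsize(q) = \mathrm{poly}(\bitsize(p))$. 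Then I would feed $q$ to the extended Routh--Hurwitz procedure developed in Section \ref{sec-proof-thm-boundedness-cont} for the continuous-time case, which decides precisely whether all roots of $q$ are in the closed left half-plane with imaginary-axis roots simple, in polynomial bit complexity. Combining the degree-drop bookkeeping at $\lambda = 1$ with the verdict on $q$ yields the answer for $p$, hence for $A$.

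As in the continuous-time case, the reduction from rational $A = B/q$ to integer $A$ is immediate: $A$ is a positive instance of Problem \ref{prob-boundedness-disc} if and only if $B/q$ has spectral radius $\leq 1$ with appropriate Jordan structure, so after clearing denominators one works with $\frac{1}{q}B$; equivalently one scales the minimal polynomial of $B$ and tests membership in the disk of radius $q$, which is again a Möbius change of variable with polynomially-bounded coefficient growth.

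The main obstacle I anticipate is the same one flagged in the introduction for the continuous-time case: ensuring that the extended Routh--Hurwitz tableau applied to $q$ does not suffer bit-size blow-up. The Möbius transformation itself is benign, and the minimal-polynomial computation is handled by Subsection \ref{ssec-minimal-polynomial}; the delicate part is entirely inherited from the continuous-time analysis — namely that the non-standard, carefully-normalized version of the extended Routh--Hurwitz criterion (which must cope with roots on the imaginary axis, not merely count sign changes) keeps all intermediate quantities of polynomially-bounded bit size. Once Theorem \ref{thm-boundedness-cont} and its underlying machinery are in hand, Theorem \ref{thm-boundedness-disc} follows with only the additional, routine verification that the disk-to-half-plane substitution is bit-size-preserving.
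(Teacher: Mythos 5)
Your proposal follows essentially the same route as the paper: characterize boundedness of $\lVert A^t\rVert$ via the discrete-time boundedness property of the minimal polynomial (Theorem \ref{thm-roots-minimal-polynomial-boundedness-disc}), clear denominators by substituting $x\mapsto qx$ in the minimal polynomial of the numerator matrix, apply the M\"obius map $x\mapsto\frac{x+1}{x-1}$ with degree-drop accounting for the root at $1$ (Theorem \ref{thm-analyze-roots-polynomial-disc}), and hand the transformed polynomial to the extended Routh--Hurwitz test of Theorem \ref{thm-analyze-roots-polynomial}. The only (cosmetic) difference is that you keep the multiplicity of $1$ as explicit bookkeeping --- accepting only when the degree drop is at most one --- whereas the paper folds it back into the transformed polynomial as a factor $(x-1)^\delta$; your variant is, if anything, the cleaner formulation of that step.
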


The proof of Theorem \ref{thm-boundedness-disc} is presented in Section \ref{sec-proof-thm-boundedness-disc}.

\section{Preliminary results}\label{sec-preliminaries}

The following result, which follows directly from \cite[Eq.\ (8)]{bareiss1968sylvesters}, will be instrumental in the following section.
Due to space limitation, we only present a sketch of its proof as a corollary of \cite[Eq.\ (8)]{bareiss1968sylvesters}.

\begin{proposition}[{from \cite{bareiss1968sylvesters}}]\label{pro-compute-determinant}
There is an algorithm that, given any $A\in\ZZb^\mxn$, computes the tuple $(r,\calR,\calC,\det(A_{[\calR,\calC]}))$ where (i) $r$ is the rank of $A$, (ii) $\calR=\{i_1,\ldots,i_r\}\subseteq\NNb$ with $1\leq i_1<i_2<\ldots<i_r\leq m$, (iii) $\calC=\{j_1,\ldots,j_r\}\subseteq\NNb$ with $1\leq j_1<j_2<\ldots<j_r\leq n$, and (iv) $\det\,A_{[\calR,\calC]}\neq0$.
Moreover, the bit complexity of the algorithm is polynomial in $\bitsize(A)$.
\end{proposition}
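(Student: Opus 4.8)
The plan is to reduce the problem to fraction-free Gaussian elimination \`a la Bareiss and simply read off the desired tuple from the resulting echelon-type data. Concretely, I would run the Bareiss one-step integer-preserving elimination on $A$, which produces, at stage $k$, a matrix whose entries are (up to sign) the $k\times k$ minors of $A$ built from a nested sequence of pivot rows and columns. The key point from \cite[Eq.\ (8)]{bareiss1968sylvesters} is that each such intermediate entry is \emph{exactly} a $k\times k$ minor of the original matrix $A$ — not a rational number, not a product of quantities that must be cleared — so no denominators ever appear, and the bit size of every intermediate quantity is bounded by the bit size of a $k\times k$ minor of $A$, hence by Hadamard's inequality is polynomial in $\bitsize(A)$.

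The steps, in order: (1) perform Bareiss elimination with full pivoting, i.e.\ at each stage search the not-yet-processed part of the current matrix for a nonzero entry to use as the next pivot; record the originating row index $i_k$ and column index $j_k$ of that pivot. (2) If at some stage the entire remaining block is zero, stop and set $r$ to be the number of pivots found so far; this $r$ is the rank of $A$, since the remaining block being zero means every $(r+1)\times(r+1)$ minor involving the chosen rows/columns vanishes, and a standard argument (the current block entries are, up to a nonzero common minor factor, the $(r+1)\times(r+1)$ minors extending $A_{[\calR,\calC]}$) shows \emph{all} $(r+1)\times(r+1)$ minors of $A$ vanish. (3) Set $\calR=\{i_1<\cdots<i_r\}$ and $\calC=\{j_1<\cdots<j_r\}$ (sorting the recorded indices; Bareiss naturally processes them in increasing order if one does not physically permute rows/columns but only tracks indices). (4) Output $\det(A_{[\calR,\calC]})$, which by the Bareiss identity is precisely the last pivot-block entry produced (the leading $r\times r$ ``determinant'' entry), so it is available for free and is nonzero by construction.

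For the complexity bound: there are $O(mn)$ elimination steps, each touching $O(mn)$ entries, and each entry update is one integer multiplication, one integer subtraction, and one \emph{exact} integer division by the previous pivot. Every operand is a minor of $A$, so by Hadamard $\bitsize$ of each is $O(n(\log n + \bitsize(A)))$, and each arithmetic operation on such integers costs a polynomial in that quantity; multiplying by the $O((mn)^2)$ operation count keeps everything polynomial in $\bitsize(A)$. The one delicate point — and the part I expect to require the most care — is the exact-division step: one must argue that the previous pivot divides the numerator exactly (so the division is in $\ZZb$, not $\QQb$), which is the content of the Bareiss identity and is the reason the whole scheme is ``fraction-free''; and one must confirm that with full pivoting (not just the textbook case of leading principal minors all nonzero) the identity still holds with the pivot rows/columns replaced by the chosen nested sequences. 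Both are exactly what \cite[Eq.\ (8)]{bareiss1968sylvesters} provides, so modulo quoting it carefully the argument goes through; this is why the excerpt presents the proof only as a sketched corollary of that reference.
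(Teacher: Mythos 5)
Your proposal is correct and follows essentially the same route as the paper: both run Bareiss's fraction-free elimination with pivot search in the remaining block, use the identity that every intermediate entry $a_{ij}^{(k)}$ equals a $(k+1)\times(k+1)$ minor of $A$ (Eq.\ (8) of \cite{bareiss1968sylvesters}) to keep all quantities integral and of polynomially bounded bit size, declare the rank when the remaining block vanishes, and read $\det(A_{[\calR,\calC]})$ off the last pivot. The only cosmetic difference is that you bound the bit size of a minor via Hadamard's inequality, whereas the paper bounds it as a sum of $k!$ products of entries; both yield the same polynomial conclusion.
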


\begin{proof}
\extended{See Appendix \ref{ssec-pro-compute-determinant-proof} for a sketch of proof.}%
\short{See the extended version of this paper \cite{berger2021complexity}.}
\end{proof}

In particular, when $A\in\ZZb^\nxn$, the determinant of $A$ can be obtained from the output $(r,\calR,\calC,D)$ of the algorithm: if $r<n$, then $\det(A)=0$, otherwise $\det(A)=D$.

By combining the algorithm of Proposition \ref{pro-compute-determinant} with the well-known \emph{rule of Cramer} (see, e.g., \cite[\S0.8.3]{horn2013matrix}), one can obtain a polynomial-time algorithm for the resolution of systems of linear equations with integer coefficients.%
\footnote{Let us mention that more efficient algorithms for this problem have been proposed in the literature, such as the well-known \emph{Gaussian Elimination}.
However, the latter necessitates more advanced analysis for a careful proof of its polynomial-time nature (see \cite[\S2]{bareiss1968sylvesters}).}

\begin{proposition}\label{pro-solve-linear-system}
There is an algorithm that, given any $A\in\ZZb^\mxn$ and $b\in\ZZb^m$, computes integers $x_0\neq0$ and $x_1,\ldots,x_n$ such that $x=[x_1,\ldots,x_n]/x_0$ is a solution to $Ax=b$ if the system is feasible, or outputs that the system has no solution (in $\Re^n$).
Moreover, the bit complexity of the algorithm is polynomial in $\bitsize(A)+\bitsize(b)$.
\end{proposition}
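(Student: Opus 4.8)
The plan is to reduce the linear-system solve to a constant number of determinant computations, each of which is handled by Proposition \ref{pro-compute-determinant}, and then invoke Cramer's rule. First I would run the algorithm of Proposition \ref{pro-compute-determinant} on the coefficient matrix $A$ to obtain $(r,\calR,\calC,\det(A_{[\calR,\calC]}))$, so that $r=\operatorname{rank}(A)$ and $A_{[\calR,\calC]}$ is an $r\times r$ nonsingular submatrix. Next I would test feasibility: the system $Ax=b$ has a solution in $\Re^n$ if and only if $\operatorname{rank}([A\mid b])=\operatorname{rank}(A)=r$. To check this I run the algorithm of Proposition \ref{pro-compute-determinant} a second time on the augmented matrix $[A\mid b]$ and compare the returned rank to $r$; if it exceeds $r$, output ``no solution''. (Alternatively, one checks that $\det\bigl([A\mid b]_{[\calR\cup\{i\},\,\calC\cup\{n+1\}]}\bigr)=0$ for each row index $i\notin\calR$, which is again a polynomial number of determinant calls.)

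If the system is feasible, I would build a solution supported on the columns $\calC=\{j_1,\dots,j_r\}$, setting $x_j=0$ for $j\notin\calC$. Writing $M\eqdef A_{[\calR,\calC]}$ and $c\eqdef b_{[\calR]}$, the reduced square system $M y = c$ has the unique solution $y=M^{-1}c$, and every original equation is then automatically satisfied because the rows outside $\calR$ are linear combinations of the rows in $\calR$ (rank $r$). By Cramer's rule, $y_k = \det(M^{(k)})/\det(M)$, where $M^{(k)}$ is $M$ with its $k$-th column replaced by $c$. Hence I set $x_0\eqdef\det(M)$ (nonzero, as guaranteed by Proposition \ref{pro-compute-determinant}) and, for each $k\in\{1,\dots,r\}$, $x_{j_k}\eqdef\det(M^{(k)})$; these are integers since $M$, $c$ have integer entries. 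Each of the $r\le\min(m,n)$ determinants $\det(M^{(k)})$ is computed by another call to the algorithm of Proposition \ref{pro-compute-determinant} (an $r\times r$ integer matrix is full rank iff its determinant is nonzero, and the algorithm returns that determinant when $r=n$). This yields integers $x_0\neq0$ and $x_1,\dots,x_n$ with $x=[x_1,\dots,x_n]/x_0$ solving $Ax=b$.

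For the complexity bound, note that only $O(\min(m,n))$ invocations of the algorithm of Proposition \ref{pro-compute-determinant} are made, each on a submatrix of $[A\mid b]$ whose bit size is at most $\bitsize(A)+\bitsize(b)$; by that proposition each call runs in time polynomial in $\bitsize(A)+\bitsize(b)$, and the comparisons, zero-tests, and bookkeeping add only polynomial overhead. The main point requiring care — and the only place where the argument is more than bookkeeping — is the a priori bound on the bit size of the output integers $x_0,\dots,x_n$: by Hadamard's inequality applied to the $r\times r$ matrices $M$ and $M^{(k)}$, each such determinant is bounded in absolute value by a product of column norms, so $\bitsize(x_i)$ is polynomial (in fact $O(r(\log r + \text{(entry bit size)}))$) in $\bitsize(A)+\bitsize(b)$; this guarantees that writing down the answer, and all intermediate arithmetic, stays within a polynomial-size regime. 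I expect this bit-size bound on the Cramer numerators/denominator to be the crux, while the reduction itself is routine given Propositions \ref{pro-compute-determinant} and the rule of Cramer.
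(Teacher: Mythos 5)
Your proposal is correct and follows essentially the same route as the paper: Proposition \ref{pro-compute-determinant} to locate a nonsingular $r\times r$ submatrix $A_{[\calR,\calC]}$, Cramer's rule on the reduced square system with each numerator computed as one more determinant call, and zero-padding of the variables outside $\calC$. The only difference is in how infeasibility is detected --- the paper substitutes the candidate solution back into the full system via one matrix--vector product, whereas you test $\operatorname{rank}([A\mid b])=\operatorname{rank}(A)$ up front --- and both variants are valid and run in time polynomial in $\bitsize(A)+\bitsize(b)$.
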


\begin{proof}
\extended{See Appendix \ref{ssec-pro-solve-linear-system-proof}.}%
\short{See the extended version of this paper \cite{berger2021complexity}.}
\end{proof}

\section{Proof of Theorem \ref{thm-boundedness-cont}}\label{sec-proof-thm-boundedness-cont}

\subsection{The minimal polynomial}\label{ssec-minimal-polynomial}

The first step of our algorithm to answer Problem \ref{prob-boundedness-cont} is to compute the minimal polynomial of $A$.
We remind that the \emph{minimal polynomial} of a matrix $A\in\Re^\nxn$ is defined as the monic real polynomial $p:x\mapsto x^d + c_1x^{d-1} + \ldots + c_d$ with smallest degree $d$ such that $p(A)=0$.

The relevance of the minimal polynomial for Problem \ref{prob-boundedness-cont} is explained in Theorem \ref{thm-roots-minimal-polynomial-boundedness-cont} below.
First, we introduce the following terminology that will simplify the statement of the theorem.

\begin{definition}\label{def-boundedness-property}
A (complex or real) polynomial will be said to have the \emph{boundedness property} if each of its roots satisfies one of the following two conditions: (i) has negative real part, or (ii) is on the imaginary axis and is simple.
\end{definition}

\begin{theorem}[Folk]\label{thm-roots-minimal-polynomial-boundedness-cont}
For any $A\in\Re^\nxn$, it holds that $\sup_{t\in\Re_{\geq0}}\,\lVert e^{At}\rVert<\infty$ if and only if the minimal polynomial of $A$ has the boundedness property.
\end{theorem}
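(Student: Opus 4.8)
The plan is to reduce the matrix exponential to a direct sum of Jordan blocks and analyze the growth of each block separately, then match this against the arithmetic of the minimal polynomial. First I would recall that $\sup_{t\geq 0}\lVert e^{At}\rVert<\infty$ is invariant under similarity transformations (since $\lVert e^{(SAS^{-1})t}\rVert = \lVert S e^{At} S^{-1}\rVert \leq \lVert S\rVert\,\lVert S^{-1}\rVert\,\lVert e^{At}\rVert$, and conversely), so we may assume $A$ is in Jordan normal form $A = \bigoplus_k J_{n_k}(\lambda_k)$. Then $e^{At} = \bigoplus_k e^{J_{n_k}(\lambda_k)t}$, and $\sup_{t\geq0}\lVert e^{At}\rVert$ is finite if and only if $\sup_{t\geq0}\lVert e^{J_{n_k}(\lambda_k)t}\rVert$ is finite for every block.

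The core computation is that for a single Jordan block $J_n(\lambda) = \lambda I + N$ with $N$ the nilpotent shift, $e^{J_n(\lambda)t} = e^{\lambda t}\sum_{j=0}^{n-1}\frac{t^j}{j!}N^j$, whose norm behaves like $e^{(\realp\lambda)t}\cdot P(t)$ where $P$ is a polynomial in $t$ of degree exactly $n-1$ (when $n\geq 1$). From this I would extract the dichotomy: the block's exponential is bounded on $\Re_{\geq0}$ if and only if either (i) $\realp\lambda < 0$ (exponential decay dominates any polynomial factor), or (ii) $\realp\lambda = 0$ and $n = 1$ (the polynomial factor is constant and $\lvert e^{\lambda t}\rvert = 1$). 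If $\realp\lambda > 0$, or if $\realp\lambda = 0$ with $n\geq 2$, the norm is unbounded. Collecting over all blocks: $\sup_{t\geq0}\lVert e^{At}\rVert<\infty$ iff every eigenvalue $\lambda$ of $A$ either has $\realp\lambda<0$, or has $\realp\lambda=0$ and all Jordan blocks attached to $\lambda$ have size $1$.

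Finally I would translate this spectral condition into a statement about the minimal polynomial. The standard fact is that the minimal polynomial of $A$ equals $\prod_{\lambda\in\mathrm{spec}(A)}(x-\lambda)^{m_\lambda}$, where $m_\lambda$ is the size of the \emph{largest} Jordan block associated with $\lambda$. Hence an eigenvalue $\lambda$ on the imaginary axis has all its Jordan blocks of size $1$ precisely when $m_\lambda = 1$, i.e. when $\lambda$ is a simple root of the minimal polynomial; and eigenvalues with $\realp\lambda<0$ impose no constraint beyond appearing as some root of the minimal polynomial. This is exactly the boundedness property of Definition \ref{def-boundedness-property} applied to the minimal polynomial, and since a matrix is real, its minimal polynomial is real, so the statement is consistent. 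Chaining the three equivalences gives the theorem.

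The main obstacle is the sharp asymptotic estimate of $\lVert e^{J_n(\lambda)t}\rVert$ — specifically confirming that the polynomial prefactor genuinely has degree $n-1$ and does not accidentally cancel, so that the size-$1$ condition in case (ii) is both necessary and sufficient. This amounts to noting that the $(1,n)$ entry of $\sum_{j=0}^{n-1}\frac{t^j}{j!}N^j$ is $\frac{t^{n-1}}{(n-1)!}\neq 0$, which pins down the growth rate; the rest is bookkeeping over the block decomposition and the textbook characterization of the minimal polynomial in terms of Jordan structure.
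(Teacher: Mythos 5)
Your proposal is correct and follows essentially the same route as the paper's own proof: reduce to the Jordan normal form, use the explicit formula $e^{J_n(\lambda)t}=e^{\lambda t}\sum_{k=0}^{n-1}\frac{t^k}{k!}N^k$ to get the block-wise boundedness dichotomy, and invoke the standard fact that the minimal polynomial is $\prod_s(x-\lambda_s)^{n_s}$ with $n_s$ the largest Jordan block size. You simply spell out the similarity-invariance and the degree-$(n-1)$ prefactor argument that the paper leaves implicit.
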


\begin{proof}
\extended{See Appendix \ref{ssec-thm-roots-minimal-polynomial-boundedness-cont-proof}.}%
\short{See the extended version of this paper \cite{berger2021complexity}.}
\end{proof}

The above theorem allows us to provide an algorithm to answer Problem \ref{prob-boundedness-cont}.
The algorithm is presented in Figure \ref{algo-prob-boundedness-cont}; it consists in two main steps that are described in the following subsections.

\begin{figure}
\hrule
\vskip3pt
\noindent{\bfseries Input:} $A\in\ZZb^\nxn$.

\noindent{\bfseries Output:} ``YES'' if $A$ is a positive instance of Problem \ref{prob-boundedness-cont} and ``NO'' otherwise.

\noindent\emph{Algorithm:}

\noindent$\triangleright$\kern3pt{\bfseries Step\kern2pt 1:}\kern3pt Using Theorem \ref{thm-compute-minimal-polynomial}, compute integers $e_0\neq0$ and $e_1,\ldots,e_d$ such that $x\mapsto x^d+e_1x^{d-1}/e_0+\ldots+e_d/e_0$ is the minimal polynomial of $A$.\\
Let $p:x\mapsto e_0x^d+e_1x^{d-1}+\ldots+e_d$.

\noindent$\triangleright$\kern3pt{\bfseries Step\kern2pt 2:}\kern3pt Using Theorem \ref{thm-analyze-roots-polynomial}, {\bfseries return} ``YES'' if $p$ has the boun\-dedness property (see Definition \ref{def-boundedness-property}) and {\bfseries return} ``NO'' other\-wise.
\vskip3pt
\hrule
\caption{Algorithm for answering Problem \ref{prob-boundedness-cont} for integer matrices (rational matrices can be treated in the same way by considering only the ``numerator matrix'').}
\label{algo-prob-boundedness-cont}
\end{figure}

\subsection{Step 1: Computation of the minimal polynomial}\label{ssec-computing-minimal-poly}

The definition of the minimal polynomial, combined with the algorithm of Proposition \ref{pro-solve-linear-system}, allows for polynomial-time computation of the minimal polynomial of integer matrices.%
\footnote{Again, let us mention that more efficient algorithms have been proposed in the literature (see, e.g., in \cite{dumas2005efficient}), but necessitate more work for their description and for the analysis of their complexity.
Hence, we present an elementary algorithm to keep the paper simple and self-contained.}

\begin{theorem}\label{thm-compute-minimal-polynomial}
There is an algorithm that, given any $A\in\ZZb^\nxn$, computes integers $e_0\neq0$ and $e_1,\ldots,e_d$ such that $x\mapsto x^d+e_1x^{d-1}/e_0+\ldots+e_d/e_0$ is the minimal polynomial of $A$.
Moreover, the bit complexity of the algorithm is polynomial in $\bitsize(A)$.
\end{theorem}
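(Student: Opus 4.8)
The plan is to exploit the defining property of the minimal polynomial directly: for each candidate degree $d = 1, 2, \ldots, n$, we test whether there exist real coefficients $c_1, \ldots, c_d$ such that $A^d + c_1 A^{d-1} + \cdots + c_d I = 0$, i.e., whether $A^d$ lies in the linear span of $I, A, \ldots, A^{d-1}$. The smallest $d$ for which this system is feasible gives the degree of the minimal polynomial, and the solution gives its coefficients. Concretely, I would first compute the powers $A^0, A^1, \ldots, A^n$ iteratively by repeated integer matrix multiplication; then, for $d = 1, \ldots, n$, form the linear system whose unknowns are $c_1, \ldots, c_d$ and whose equations are the $n^2$ scalar entry-wise equations $\sum_{k=1}^{d} c_k (A^{d-k})_{ij} = -(A^d)_{ij}$, and apply the algorithm of Proposition \ref{pro-solve-linear-system} to decide feasibility and, if feasible, return an integer solution $(x_0, x_1, \ldots, x_d)$ with $x_0 \neq 0$. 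Setting $e_0 = x_0$ and $e_k = x_k$ (after flipping sign if needed so that the monic normalization $x^d + e_1 x^{d-1}/e_0 + \cdots + e_d/e_0$ comes out right) yields the claimed output. Since the minimal polynomial always has degree $\leq n$ (Cayley--Hamilton), the loop terminates with a feasible system by $d = n$ at the latest.

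The correctness of this scheme is essentially immediate from the definition of the minimal polynomial, so the substance of the proof is the complexity analysis, which splits into two parts. First, one must bound the bit size of the matrix powers: if $\bitsize(A) = \beta$, then the entries of $A$ are bounded by $2^\beta$ in absolute value, and a routine induction shows that the entries of $A^k$ are bounded by $(n 2^\beta)^k \leq (n 2^\beta)^n$, so $\bitsize(A^k)$ is polynomial in $\beta$ and $n$ — hence polynomial in $\bitsize(A)$ since $n \leq \bitsize(A)$. Each of the $n$ matrix multiplications therefore runs in time polynomial in $\bitsize(A)$, and there are at most $n$ of them. Second, each linear system handed to Proposition \ref{pro-solve-linear-system} has $n^2$ equations in at most $n$ unknowns, with coefficients and right-hand side drawn from the entries of $A^0, \ldots, A^n$, whose total bit size is polynomial in $\bitsize(A)$; so by Proposition \ref{pro-solve-linear-system} each of the at most $n$ invocations runs in polynomial time and returns integers of polynomially bounded bit size. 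Summing over the $O(n)$ iterations keeps everything polynomial in $\bitsize(A)$.

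The main obstacle — and the only place where care is genuinely needed — is ensuring that the intermediate quantities, i.e., the entries of the iterated powers $A^k$, do not suffer a bit-size blow-up; but as noted above, the bound $\bitsize(A^k) = O(k(\log n + \beta))$ with $k \leq n$ is benign, so no non-trivial trick (unlike in the Routh--Hurwitz step) is required here. A minor bookkeeping point is the sign/normalization convention when converting the output $(x_0, \ldots, x_d)$ of Proposition \ref{pro-solve-linear-system} into the monic form: one takes $e_0 = x_0$, $e_k = x_k$ when the equations are set up as $\sum_k c_k (A^{d-k})_{ij} + (A^d)_{ij} = 0$, but one should verify that $x_0$ can be taken positive (or simply allow $e_0 \neq 0$ of either sign, as the theorem statement does). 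With these observations in place, the proof is a short assembly of Proposition \ref{pro-solve-linear-system} with the power-bound estimate.
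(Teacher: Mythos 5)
Your proposal is correct and follows essentially the same route as the paper: for each candidate degree $d=1,\ldots,n$, vectorize the matrix equation $A^d+c_1A^{d-1}+\cdots+c_dI=0$ into a linear system over the entries, solve it with Proposition \ref{pro-solve-linear-system}, and take the smallest feasible $d$, with the complexity controlled by the same bound on the bit size of the entries of $A^\ell$ (each entry being a sum of $n^\ell$ products of $\ell$ entries of $A$). No meaningful differences from the paper's argument.
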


\begin{proof}
For each $d\in\{1,\ldots,n\}$, write the matrix equa\-tion $A^d + c_1A^{d-1} + \ldots + c_d I=0$, with unknowns $c_1,\allowbreak\ldots,\allowbreak c_d\in\Re$.
For any $\ell\in\ZZb_{\geq0}$, it holds that the bit size of the entries of $A^\ell$ is bounded by $\ell\bitsize(A)\allowbreak+\ell\bitsize(n)$ (since each entry is the sum of $n^\ell$ products of $\ell$ elements of $A$).

The matrix equation can be rewritten in the classical vector form: $Mx=N$, where $x=[c_1,\ldots,c_d]$, $N=-\vecc(A^d)$ and $M=[\vecc(A^{d-1}),\ldots,\vecc(A^0)]$ ($\vecc(\cdot)$ is the vectorization operator%
\footnote{I.e., if $A=[a_1,\ldots,a_n]\in\Re^\mxn$ with $a_i\in\Re^m$ for all $i\in\{1,\ldots,n\}$, then $\vecc(A)=[a_1^\top,\ldots,a_n^\top]^\top$.}%
).
From the above, it holds that the bitsize of the entries of $M$ is bounded by $n\bitsize(A)+n^2\leq 2(\bitsize(A))^2$.
Hence, $\bitsize(M)\leq2(\bitsize(A))^5$ (since the number of elements of $M$ is equal to $n^2d$).
Similarly, we find that $\bitsize(N)\leq2(\bitsize(A))^4$.

Hence, using the algorithm of Proposition \ref{pro-solve-linear-system}, we can find integers $e_0\neq0$ and $e_1,\ldots,e_d$ such that $A^d+e_1A^{d-1}/e_0+\ldots+e_dI/e_0=0$ or conclude that no such numbers (integer or not) exist.
The smallest $d$ for which such integers exist provides the minimal polynomial of $A$.
Moreover, from the developments above, the bit size of these integers and the time to compute them is polynomial in $\bitsize(A)$.
\end{proof}

\subsection{Step 2: Analysis of the roots of a polynomial}

The goal of this subsection is to prove Theorem \ref{thm-analyze-roots-polynomial} below, which states that deciding whether a polynomial with integer coefficients has the boundedness property can be done in polynomial time w.r.t.\ the bit size of its coefficients.

The proof relies on the \emph{Routh--Hurwitz stability criterion}, which is an algorithmic test to decide whether all the roots of a given polynomial have negative real part and was shown to be implementable by a polynomial-time algorithm (see, e.g., \cite{pena2004characterizations}).
Extensions of the Routh--Hurwitz criterion allow to compute for a given polynomial the number of roots on the imaginary axis and their multiplicity (see, e.g., \cite[\S15]{gantmacher2000thetheory}, \cite{chen1993new,choghadi2013routh}).
However, to the best of the authors' knowledge, no proof of the polynomial bit complexity of such algorithms is available in the literature.
Hence, in Theorem \ref{thm-analyze-roots-polynomial}, we present a minimalist version of the extended Routh--Hurwitz algorithm that is sufficient for our needs (verifying the boundedness property), and thriving on Proposition \ref{pro-compute-determinant}, we show that this minimalist version can be implemented by a polynomial-time algorithm.

\begin{theorem}\label{thm-analyze-roots-polynomial}
There is an algorithm that, given any polynomial $p:x\mapsto a_0x^d+\ldots+a_d$ with integer coefficients $a_0,\allowbreak\ldots,a_d$, outputs ``YES'' if $p$ has the boundedness property, and outputs ``NO'' otherwise.
Moreover, the bit complexity of the algorithm is polynomial in $\sum_{\ell=0}^d\bitsize(a_\ell)$.
\end{theorem}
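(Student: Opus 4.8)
The plan is to reduce the boundedness property of $p$ to a bounded number of sign tests on determinants of integer matrices built from its coefficients --- each discharged by Proposition~\ref{pro-compute-determinant} --- together with one bit-size-safe polynomial $\gcd$ computation. After discarding leading zero coefficients we may assume $a_0>0$ and $d=\deg p$. A root $\lambda$ of $p$ violates Definition~\ref{def-boundedness-property} exactly when $\realp(\lambda)>0$, or $\realp(\lambda)=0$ and $\lambda$ is a multiple root of $p$; and the multiple roots of $p$ are precisely the roots of $g:=\gcd(p,p')$, with multiplicities lowered by one. Hence $p$ has the boundedness property if and only if (a) $p$ has no root with positive real part, and (b) $g$ is a Hurwitz polynomial, meaning all its roots have negative real part (vacuously true when $\deg g=0$). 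Moreover (a) may be tested on the squarefree part $\bar p:=p/g$, which has the same root \emph{set} as $p$: being squarefree, $\bar p$ has all its imaginary-axis roots simple, so (a) is equivalent to ``$\bar p$ has no root in the open right half-plane''. A short direct argument confirms that (a) and (b) together are equivalent to the boundedness property.

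First I would produce $g$ and $\bar p$ as primitive integer polynomials without incurring bit-size blow-up. Rather than running the Euclidean algorithm on $p$ and $p'$ with its unsafe divisions --- recall the ``bit size growth factor'' warning in \cite[p.~321]{pena2004characterizations} --- I would use that the coefficients of the successive subresultants of $p$ and $p'$ are minors of their Sylvester matrix, an integer matrix of bit size polynomial in $\sum_{\ell=0}^d\bitsize(a_\ell)$; Proposition~\ref{pro-compute-determinant} then returns the degrees and coefficients of every subresultant, in particular those of $g$, in polynomial time and with polynomially bounded bit size, after which $\bar p=p/g$ follows from Proposition~\ref{pro-solve-linear-system}. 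Condition (b) is the classical Hurwitz test: for an integer polynomial $q=b_0x^e+\dots+b_e$ with $b_0>0$, $q$ is Hurwitz iff the $e$ leading principal minors of its $e\times e$ Hurwitz matrix are all positive; since that matrix has integer entries, each minor is an integer determinant whose sign is returned by Proposition~\ref{pro-compute-determinant}, so applying this to $g$ (with leading coefficient normalized positive) settles (b) with at most $d$ sign evaluations.

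For condition (a) I would run a minimalist extended Routh--Hurwitz scheme on $\bar p$ to decide whether its number of roots in the open right half-plane is zero. In the regular case that number equals the number of sign changes in the sequence $1,\Delta_1,\dots,\Delta_{\deg\bar p}$ of leading principal minors of the Hurwitz matrix of $\bar p$, so the test is $O(d)$ sign evaluations through Proposition~\ref{pro-compute-determinant}. The delicate situation is a vanishing $\Delta_k$, i.e.\ a zero row of the Routh array: the associated auxiliary polynomial $\psi$ divides $\bar p$ and has roots symmetric about the origin, and, $\bar p$ being squarefree, $\psi$ is squarefree with $\psi(x)=\psi_0(x^2)$ or $\psi(x)=x\,\psi_0(x^2)$. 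As in the classical treatment (cf.\ \cite[\S15]{gantmacher2000thetheory}), the number of open-RHP roots of $\bar p$ is the sum of the sign changes contributed by the regular segments of the array and, at each zero row, of the number of open-RHP roots of the corresponding $\psi$; since $\psi$ has roots symmetric about the origin, this last count vanishes iff all roots of $\psi$ lie on the imaginary axis, i.e.\ $\psi_0$ is real-rooted with all roots negative. Real-rootedness of the squarefree integer polynomial $\psi_0$ is certified by a Sturm/subresultant computation, again reduced to signs of integer determinants via Proposition~\ref{pro-compute-determinant}; and once $\psi_0$ is known real-rooted, Descartes' rule makes ``all roots of $\psi_0$ negative'' equivalent to the absence of sign changes among its coefficients. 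One then continues the analysis on the deflated polynomial $\bar p/\psi$ (obtained from Proposition~\ref{pro-solve-linear-system}); each auxiliary polynomial strictly lowers the degree, so only $O(d)$ stages occur, each of polynomial cost.

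The hard part will be the degenerate branch of the Routh--Hurwitz process: one must re-establish, in a form tailored to the determinant implementation, that this minimalist extended scheme correctly certifies the absence of an open-RHP root of $\bar p$ even in the presence of zero rows and singular leading minors, while simultaneously guaranteeing that every number whose sign is examined is the determinant of an integer matrix of controlled bit size --- so that no division-induced blow-up ever arises. The regular case, the $\gcd$ computation and the Hurwitz test of (b) are, by contrast, routine consequences of Propositions~\ref{pro-compute-determinant} and~\ref{pro-solve-linear-system}.
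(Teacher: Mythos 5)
Your architecture is genuinely different from the paper's and is, in outline, viable. The paper keeps a single decomposition $p(ix)=p_0(x)+ip_1(x)$: the Euclidean remainder sequence of $(p_0,p_1)$ simultaneously delivers the Routh--Hurwitz count for the factor free of imaginary-axis roots and, through its terminal GCD $p_m$, the imaginary-axis factor, whose simple-real-rootedness is certified by one Sturm sequence (Lemmas \ref{lem-routh-hurwitz-regular}, \ref{lem-sturm-real-roots} and \ref{lem-routh-hurwitz-boundedness}); every sign that is needed is read off from an integer Hurwitz-type determinant via \eqref{eq-coeffs-Hurwitz-determinants} and Proposition \ref{pro-compute-determinant}, and Corollary \ref{cor-routh-hurwitz-boundedness-degree-dec} lets the algorithm answer ``NO'' the moment any degree drop exceeds one, so only the terminal degeneracy ever has to be handled constructively. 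You instead split off the multiple roots first via $g=\gcd(p,p')$ and reduce to ``$g$ is Hurwitz'' plus ``$\bar p=p/g$ has no root in the open right half-plane''. The equivalence with the boundedness property is correct, the subresultant/Sylvester-minor route to $g$ is bit-size-safe, and you buy a degeneracy-free Hurwitz test for $g$ --- at the price of still having to run an \emph{extended} Routh--Hurwitz count on $\bar p$, which is where the paper's real work lies.

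That step --- the one you yourself label ``the hard part'' --- is the content of the theorem, and your sketch of it has a concrete hole: you equate ``a vanishing $\Delta_k$'' with ``a zero row of the Routh array'', but these are not the same event. A full zero row (the paper's situation (S1)) is the case your auxiliary polynomial $\psi$ handles, since there $\psi$ divides $\bar p$ and has origin-symmetric roots. A vanishing leading minor with a \emph{nonzero} row (the paper's (S2), i.e.\ a degree drop greater than one in the remainder sequence) produces no divisor of $\bar p$, the classical fix is an $\epsilon$-perturbation that is not implementable with exact integer determinants, and the correct resolution --- that this configuration already forces a root in the open right half-plane, so one may output ``NO'' at once --- is precisely Corollary \ref{cor-routh-hurwitz-boundedness-degree-dec} and needs proof (it follows from $V(+\infty)-V(-\infty)\le m<\deg p_0-\deg p_m=\tau_s+\tau_u$). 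Without it your case analysis of the array is not exhaustive. Two lesser points: in the regular case the number of right-half-plane roots is $V(1,\Delta_1,\Delta_3,\ldots)+V(1,\Delta_2,\Delta_4,\ldots)$, i.e.\ the sign changes in $1,\Delta_1,\Delta_2/\Delta_1,\ldots,\Delta_n/\Delta_{n-1}$, not in $1,\Delta_1,\ldots,\Delta_n$ --- harmless when testing whether the count is zero, but not when summing contributions of ``regular segments''; and the iterated deflation $\bar p\mapsto\bar p/\psi$ needs an explicit coefficient bound for divisors of $\bar p$ (e.g.\ a Mignotte-type bound) to guarantee that the recursion's bit size stays polynomial.
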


The rest of this section is devoted to proving Theorem \ref{thm-analyze-roots-polynomial}.
To do that, we first introduce several results and concepts that are classical in the study of the Routh--Hurwitz criterion.

Let $p_0,p_1,\ldots,p_{m+1}$ be a sequence of real polynomials such that
\begin{equation}\label{eq-seq-poly}
\left\{\begin{array}{l}
p_{m+1}\equiv0,\quad p_k\not\equiv0, \quad \forall\,k\in\{1,\ldots,m\}, \\
p_{k+1} = -\remain(p_{k-1},p_k), \quad \forall\,k\in\{1,\ldots,m\},
\end{array}\right.
\end{equation}
where $\remain(p_{k-1},p_k)$ is the \emph{remainder} of the Euclidean div\-ision of $p_{k-1}$ by $p_k$, meaning that $\deg\,p_{k+1}<\deg\,p_k$ and there is a real polynomial $q_k$ such that $p_{k+1} = q_kp_k-p_{k-1}$.
Hence, the polynomials $p_0,\ldots,p_{m+1}$ are those that would be obtained by applying the \emph{Euclidean algorithm} (see, e.g., \cite[\S1.5]{cox2015ideals}) on $p_0$ and $p_1$, which is known to produce the GCD of $p_0$ and $p_1$.

\begin{lemma}[Folk]\label{lem-gcd-euclidean-algorithm}
Let $p_0,\ldots,\allowbreak p_{m+1}$ be as \eqref{eq-seq-poly}.
Then, for all $k\in\{0,\ldots,m\}$, $p_m$ is a greatest common divisor (GCD) of $p_k$ and $p_{k+1}$.
\end{lemma}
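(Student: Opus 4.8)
The plan is to prove this by induction on $k$, but running the induction \emph{downward} from $k = m$ to $k = 0$. The base case $k = m$ is immediate: $p_{m+1} \equiv 0$, so $\gcd(p_m, p_{m+1}) = \gcd(p_m, 0) = p_m$ (up to the usual associate/normalization ambiguity, which is harmless since we only claim ``a GCD'').

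For the inductive step, suppose $p_m$ is a GCD of $p_k$ and $p_{k+1}$ for some $k \in \{1, \ldots, m\}$; I want to conclude that $p_m$ is a GCD of $p_{k-1}$ and $p_k$. The key identity is the defining recurrence in \eqref{eq-seq-poly}: $p_{k+1} = -\remain(p_{k-1}, p_k) = p_{k-1} - q_k p_k$ for the quotient polynomial $q_k$ (rearranging $p_{k+1} = q_k p_k - p_{k-1}$). From this I read off two divisibility facts. First, any common divisor of $p_{k-1}$ and $p_k$ divides $p_{k+1} = p_{k-1} - q_k p_k$, hence divides both $p_k$ and $p_{k+1}$, hence (being a common divisor of those two) divides their GCD $p_m$. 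Second, conversely, $p_m$ divides $p_k$ and $p_{k+1}$ by the inductive hypothesis, so $p_m$ divides $p_{k-1} = p_{k+1} + q_k p_k$; thus $p_m$ is a common divisor of $p_{k-1}$ and $p_k$. Combining: $p_m$ is a common divisor of $p_{k-1}, p_k$ that is divisible by every common divisor of $p_{k-1}, p_k$, which is exactly the definition of being a GCD. This closes the induction, and taking $k = 0$ gives the statement; the intermediate conclusions for all $k$ are obtained along the way.

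I do not anticipate a genuine obstacle here — this is the standard correctness proof of the Euclidean algorithm, and the only thing to be slightly careful about is the bookkeeping of ``a GCD'' versus ``the GCD'' (polynomial GCDs are only defined up to a nonzero scalar multiple), so I will phrase every divisibility statement in terms of the divisibility preorder and never assume monic normalization. One should also note that $\deg p_{k+1} < \deg p_k$ guarantees the sequence is well-defined and finite, but that is built into the hypothesis \eqref{eq-seq-poly} and need not be reproved. The whole argument is a short two-direction divisibility chase.

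\begin{proof}
We prove, by downward induction on $k$, that $p_m$ is a GCD of $p_k$ and $p_{k+1}$ for every $k\in\{0,\ldots,m\}$.

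For $k=m$, we have $p_{m+1}\equiv0$, and every polynomial is a divisor of $0$, so the common divisors of $p_m$ and $p_{m+1}$ are exactly the divisors of $p_m$; hence $p_m$ is a GCD of $p_m$ and $p_{m+1}$.

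Now let $k\in\{1,\ldots,m\}$ and assume $p_m$ is a GCD of $p_k$ and $p_{k+1}$; we show $p_m$ is a GCD of $p_{k-1}$ and $p_k$. By \eqref{eq-seq-poly}, $p_{k+1}=-\remain(p_{k-1},p_k)=p_{k-1}-q_kp_k$ for some real polynomial $q_k$, equivalently $p_{k-1}=p_{k+1}+q_kp_k$. First, $p_m$ divides both $p_k$ and $p_{k+1}$ by the induction hypothesis, hence $p_m$ divides $p_{k-1}=p_{k+1}+q_kp_k$; thus $p_m$ is a common divisor of $p_{k-1}$ and $p_k$. Second, let $g$ be any common divisor of $p_{k-1}$ and $p_k$. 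Then $g$ divides $p_{k+1}=p_{k-1}-q_kp_k$, so $g$ is a common divisor of $p_k$ and $p_{k+1}$, and therefore $g$ divides $p_m$ (since $p_m$ is a GCD of $p_k$ and $p_{k+1}$). Combining the two facts, $p_m$ is a common divisor of $p_{k-1}$ and $p_k$ that is divisible by every common divisor of $p_{k-1}$ and $p_k$, i.e., $p_m$ is a GCD of $p_{k-1}$ and $p_k$. This completes the induction, and in particular establishes the claim for all $k\in\{0,\ldots,m\}$.
\end{proof}
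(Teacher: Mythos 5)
Your proof is correct and is essentially the paper's argument: the paper packages the same two divisibility observations as a proof by contradiction via a largest counterexample index $k$, which is just the contrapositive form of your downward induction from $k=m$. (The sign in your rearrangement $p_{k-1}=p_{k+1}+q_kp_k$ differs from the paper's convention $p_{k+1}=q_kp_k-p_{k-1}$, but this is immaterial to the divisibility chase.)
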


\begin{proof}
\extended{See Appendix \ref{ssec-lem-gcd-euclidean-algorithm-proof}.}%
\short{See the extended version of this paper \cite{berger2021complexity}.}
\end{proof}

From the above, it holds that $p_m$ divides $p_0$ and $p_1$.
The following result is known as the \emph{Routh--Hurwitz theorem}.


\begin{lemma}[{see, e.g., \cite[Theorem 15.2]{gantmacher2000thetheory}}]\label{lem-routh-hurwitz-regular}
Let $p_0,\ldots,\allowbreak p_{m+1}$ be as in \eqref{eq-seq-poly}, and let $p:x\mapsto\pred_0(-ix)+i\pred_1(-ix)$ where $\pred_0=p_0/p_m$ and $\pred_1=p_1/p_m$.
Then, it holds that
\[
\tau_s-\tau_u = V(+\infty)-V(-\infty),
\]
where $\tau_s$ is the number of roots of $p$ with negative real part and $\tau_u$ is the number of roots of $p$ with positive real part, and $V(y)$ ($y\in\Re\cup\{\pm\infty\}$) is the number of variations of sign%
\footnote{The \emph{number of variations of sign} in a finite sequence of nonzero real numbers (or $\pm\infty$) is the number of pairs of consecutive elements in the sequence that have opposite sign.
For instance, the number of variations of sign in $1,\allowbreak\infty,\allowbreak-1,\allowbreak3,\allowbreak-\infty,\allowbreak-2$ is equal to $3$.}
in the sequence $p_0(y),\ldots,p_m(y)$.
\end{lemma}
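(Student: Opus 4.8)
The plan is to prove Lemma~\ref{lem-routh-hurwitz-regular} by reducing it to the classical Routh--Hurwitz theorem for polynomials with a \emph{nonzero} leading Sturm-like chain, i.e.\ by quotienting out the GCD $p_m$. First I would observe that since $p_m$ divides both $p_0$ and $p_1$ (Lemma~\ref{lem-gcd-euclidean-algorithm}), the polynomials $\pred_0 = p_0/p_m$ and $\pred_1 = p_1/p_m$ are well-defined, and by the standard behaviour of the Euclidean algorithm under a common scalar/polynomial factor, the sequence $\pred_0, \pred_1, \ldots, \pred_{m-1}, \pred_m \equiv 1$ obtained from \eqref{eq-seq-poly} applied to $\pred_0, \pred_1$ satisfies $p_k = p_m \pred_k$ for all $k$, and this is now a sequence whose last nonzero term is a nonzero constant (so $\gcd(\pred_0,\pred_1)=1$). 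Consequently, for any $y\in\Re\cup\{\pm\infty\}$ at which none of the $p_k(y)$ vanish, the sign pattern of $p_0(y),\ldots,p_m(y)$ coincides with that of $\pred_0(y),\ldots,\pred_m(y)$ up to the global sign of $p_m(y)$, which does not affect the number of sign variations; hence $V(y)$ is unchanged whether computed from the $p_k$ or the $\pred_k$.

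Next I would bring in the polynomial $p(x) = \pred_0(-ix) + i\,\pred_1(-ix)$ and identify $\pred_0$ and $\pred_1$ (up to normalization) as, respectively, the ``even part'' and ``odd part'' of $p$ evaluated along the imaginary axis. Concretely, writing $p(ix)$ and separating real and imaginary parts gives two real polynomials in $x$ whose pair is exactly $(\pred_0,\pred_1)$; this is the usual setup in which the Routh--Hurwitz chain is the Sturm-type remainder sequence of these two polynomials. At this point the classical Routh--Hurwitz theorem (as in \cite[Theorem~15.2]{gantmacher2000thetheory}) applies directly to $p$, since its real/imaginary-part polynomials are coprime: it states precisely that $\tau_s - \tau_u = V(+\infty) - V(-\infty)$ where $V$ counts sign variations along the coprime remainder chain $\pred_0,\ldots,\pred_m$. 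Combining this with the previous paragraph's observation that $V$ computed along $p_0,\ldots,p_m$ equals $V$ computed along $\pred_0,\ldots,\pred_m$ yields the claimed identity.

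The main obstacle — and the step I would spend the most care on — is the bookkeeping that links the chain $(p_0,\ldots,p_{m+1})$ defined by the sign-adjusted recurrence $p_{k+1} = -\remain(p_{k-1},p_k)$ in \eqref{eq-seq-poly} to the remainder chain that appears in the textbook statement of the Routh--Hurwitz theorem, and checking that the minus signs and the degree-drop conventions match so that ``number of variations of sign'' means the same thing in both. One must also handle the fact that $p_m$ need not be constant (the polynomial $p$ may have roots on the imaginary axis, equivalently $\pred_0,\pred_1$ share those roots), which is exactly why the normalization by $p_m$ is needed and why the theorem is stated for $\pred_0,\pred_1$ rather than $p_0,p_1$ directly; I would make explicit that dividing through by $p_m$ is legitimate and that it preserves both the sign-variation counts and the relevant root data of $p$. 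Everything else is a routine invocation of the classical result together with the elementary fact that Euclidean remainder sequences commute with division by a common factor.
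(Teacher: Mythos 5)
Your proposal follows the same overall strategy as the paper's proof: the first step --- dividing the whole chain by $p_m$, noting that $\pred_{k+1}=q_k\pred_k-\pred_{k-1}$ still holds with the same quotients so that $\pred_0,\ldots,\pred_m$ is a coprime remainder chain ending in a constant, and observing that multiplying every term of the sequence at a given $y$ by the common factor $p_m(y)$ cannot change the number of sign variations --- is exactly the paper's Lemma on $V$ versus $\Vred$. The difference is in the second half: where you invoke the classical Routh--Hurwitz theorem as a black box for the coprime pair $(\pred_0,\pred_1)$, the paper proves that case from scratch. It shows that $\frac1\pi$ times the total variation of $\arg p(iy)$ as $y$ runs over $\Re$ equals $\tau_s-\tau_u$ (by factoring $p$ into linear factors and integrating the resulting $\arctan$ terms), then identifies that argument variation with $\Vred(+\infty)-\Vred(-\infty)$ by tracking how the sign-variation count changes as $y$ crosses a root of $\pred_0$ (a crossing of the imaginary axis in the positive or negative sense adds or removes one variation at the head of the sequence) and checking that a zero of an interior $\pred_k$ changes nothing because $\pred_{k-1}(y)$ and $\pred_{k+1}(y)=-\pred_{k-1}(y)$ have opposite signs. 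This last computation is precisely the ``bookkeeping of minus signs and degree-drop conventions'' that you flag as the main obstacle but do not carry out; since the lemma is explicitly attributed to Gantmacher, deferring to the textbook is defensible, but be aware that the sign convention $p_{k+1}=-\remain(p_{k-1},p_k)$ and the orientation of the crossings are exactly where such a citation can silently go wrong, and the paper's appendix exists to make that verification explicit.
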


\begin{proof}
\extended{See Appendix \ref{ssec-lem-routh-hurwitz-regular-proof}.}%
\short{See the extended version of this paper \cite{berger2021complexity}.}
\end{proof}

A similar approach can be used to compute the number of distinct real roots of a real polynomial.

\begin{lemma}[{see, e.g., \cite[p.\ 174]{gantmacher2000thetheory}}]\label{lem-sturm-real-roots}
Let $p_0,\ldots,\allowbreak p_{m+1}$ be as in \eqref{eq-seq-poly} with $p_1^{}=-p_0'$.
Then, $V(+\infty)-V(-\infty)$ is equal to the number of \emph{distinct real roots} of $p_0$, where $V(y)$ is the number of variations of sign in $p_0(y),\allowbreak\ldots,p_m(y)$.
\end{lemma}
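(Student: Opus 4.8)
The statement is the classical \emph{Sturm theorem}; since the only subtlety is to reconcile it with the sign conventions of \eqref{eq-seq-poly}, the plan is to carry out the usual bookkeeping of sign variations carefully rather than to re-derive the underlying idea. \emph{Step 1 (reduction to the squarefree case).} By Lemma~\ref{lem-gcd-euclidean-algorithm}, $p_m$ divides every $p_k$ and is a GCD of $p_0$ and $p_1=-p_0'$, hence of $p_0$ and $p_0'$. Set $\tilde p_k\eqdef p_k/p_m$. Dividing the identities $p_{k+1}=q_kp_k-p_{k-1}$ of \eqref{eq-seq-poly} by $p_m$ shows that $\tilde p_0,\ldots,\tilde p_{m+1}$ again satisfies \eqref{eq-seq-poly}, now with $\tilde p_m\equiv1$; applying Lemma~\ref{lem-gcd-euclidean-algorithm} to this new sequence, consecutive $\tilde p_k$'s are coprime. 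Moreover $\tilde p_0=p_0/\gcd(p_0,p_0')$ is (up to a scalar) the squarefree part of $p_0$, so its roots are all simple and its real roots are exactly the distinct real roots of $p_0$. Finally, at a real root $r$ of $\tilde p_0$ one has $p_0'(r)=\tilde p_0'(r)\,p_m(r)+\tilde p_0(r)\,p_m'(r)=\tilde p_0'(r)\,p_m(r)$, whence $\tilde p_1(r)=p_1(r)/p_m(r)=-p_0'(r)/p_m(r)=-\tilde p_0'(r)\neq0$. Since $p_m$ is a nonzero polynomial it has a constant sign on a neighbourhood of $+\infty$ and on a neighbourhood of $-\infty$, so $(p_k(y))_k$ and $(\tilde p_k(y))_k$ have the same number of sign variations at $y=\pm\infty$. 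Hence it is enough to prove the lemma for $(\tilde p_k)_k$; renaming $\tilde p_k$ as $p_k$, we may assume henceforth that $p_0$ is squarefree, $p_m$ is a nonzero constant, and $p_1(r)=-p_0'(r)$ at every real root $r$ of $p_0$.

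\emph{Step 2 (variation of $V$).} On any open interval containing no real root of $p_0p_1\cdots p_{m-1}$, each of $p_0,\ldots,p_m$ has constant nonzero sign, so $V(y)$ is constant there; thus $V(+\infty)-V(-\infty)$ is the sum, over the real roots $r$ of $p_0p_1\cdots p_{m-1}$, of the jump of $V$ across $r$, and since consecutive $p_k$'s are coprime the polynomials vanishing at a given $r$ contribute to that jump independently. If $1\le k\le m-1$ and $p_k(r)=0$, then $p_{k-1}(r)\neq0\neq p_{k+1}(r)$, and evaluating $p_{k+1}=q_kp_k-p_{k-1}$ at $r$ gives $p_{k+1}(r)=-p_{k-1}(r)$; hence for $x$ on either side of $r$ the triple $\bigl(p_{k-1}(x),p_k(x),p_{k+1}(x)\bigr)$ has exactly one sign variation, whatever the sign of $p_k(x)$, so this index contributes $0$. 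If instead $p_0(r)=0$, then $p_0'(r)\neq0$ (simple root) and, for $x$ near $r$, $p_1(x)$ keeps the sign of $p_1(r)=-p_0'(r)$ while $p_0(x)$ has the sign of $-p_0'(r)$ just left of $r$ and of $p_0'(r)$ just right of $r$; thus the pair $\bigl(p_0(x),p_1(x)\bigr)$ exhibits no sign variation just left of $r$ and one just right of $r$, contributing $+1$.

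\emph{Step 3 (conclusion).} Summing these contributions, $V(+\infty)-V(-\infty)$ equals the number of real roots of $p_0p_1\cdots p_{m-1}$ at which $p_0$ vanishes, i.e.\ the number of distinct real roots of $p_0$, which by Step~1 is the number of distinct real roots of the original $p_0$.

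\emph{Main obstacle.} The delicate part is Step~2: making the ``$V$ is a step function and we add up its jumps'' picture rigorous — checking that $V(\pm\infty)$ is well defined (no $p_k$ is identically zero), that the contributions of several non-consecutive $p_k$'s vanishing at a common point add up, and, above all, pinning down the relation between $p_1$ and $p_0'$ at the roots of $p_0$, which fixes the \emph{direction} of the jump and hence the sign in $V(+\infty)-V(-\infty)$; the convention $p_1=-p_0'$ in \eqref{eq-seq-poly} is precisely what makes that sign a $+$.
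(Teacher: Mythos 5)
Your overall strategy coincides with the paper's: divide the sequence by $p_m=\gcd(p_0,p_0')$ to reduce to the case where $\tilde p_0=p_0/p_m$ is squarefree and consecutive terms are coprime (the paper packages the comparison of sign variations as Lemma~\ref{lem-variations-reduced}), then track the jumps of $V$, showing that roots of the intermediate $p_k$'s contribute $0$ via $p_{k+1}(r)=-p_{k-1}(r)$ and that each real root of $\tilde p_0$ contributes $+1$. Steps 2 and 3 are correct and are essentially the paper's argument.

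There is, however, one step in Step 1 that fails precisely in the case the lemma exists to handle, namely when $p_0$ has a multiple real root $r$. You write $\tilde p_1(r)=p_1(r)/p_m(r)=-p_0'(r)/p_m(r)=-\tilde p_0'(r)$, but if $r$ has multiplicity $\mu\ge2$ in $p_0$, then $r$ is a root of $p_m=\gcd(p_0,p_0')$ of multiplicity $\mu-1\ge1$, so $p_m(r)=0$ and $p_0'(r)=0$: the quotient is $0/0$ and the chain of equalities is meaningless. Indeed the asserted identity is false there: expanding $p_0=\tilde p_0p_m$ and taking the limit of $-p_0'/p_m$ at $r$ gives $\tilde p_1(r)=-\mu\,\tilde p_0'(r)$, not $-\tilde p_0'(r)$. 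Only the \emph{sign} of $\tilde p_1(r)$ is used in Step 2, and that sign is still that of $-\tilde p_0'(r)$, so the proof is repairable; the cleanest fix is to bypass $\tilde p_1(r)$ altogether and look at the product $\tilde p_0\tilde p_1=p_0p_1/p_m^2=-p_0p_0'/p_m^2$ near $r$: since $p_m^2>0$ off its roots and $p_0p_0'=\tfrac12(p_0^2)'$ goes from negative to positive across any real root of $p_0$ (of whatever multiplicity), $\tilde p_0\tilde p_1$ goes from positive to negative across $r$, which is exactly the sign information your Step 2 needs (and is how the paper phrases the crossing condition). As written, though, the displayed equality is a genuine error rather than mere terseness, and it occurs at the one point where the sign convention $p_1=-p_0'$ of \eqref{eq-seq-poly} actually has to be propagated through the division by $p_m$.
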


\begin{proof}
\extended{See Appendix \ref{ssec-lem-sturm-real-roots-proof}.}%
\short{See the extended version of this paper \cite{berger2021complexity}.}
\end{proof}

By combining Lemmas \ref{lem-routh-hurwitz-regular} and \ref{lem-sturm-real-roots}, we obtain the following algorithmic procedure to decide whether a given polynomial has the boundedness property.

To introduce this procedure, let $p:x\mapsto c_0x^d+c_1x^{d-1}+\ldots+c_d$ be a real polynomial ($c_0\neq0$).
If $\deg\,p$ is even, then decompose $p$ into two real polynomials $p_0$ and $p_1$ such that $p(ix)=p_0(x)+ip_1(x)$ for all $x\in\Co$.
Namely,
\begin{equation}\label{eq-decomp-mp-even}
\begin{array}{l}
p_0:x\mapsto i^d c_0x^d + i^{d-2}c_2x^{d-2}+\ldots+c_d, \\
p_1:x\mapsto i^{d-2}c_1x^{d-1}+i^{d-4}c_3x^{d-3}+\ldots+c_{d-1}x.
\end{array}
\end{equation}
On the other hand, if $\deg\,p$ is odd, then decompose $p$ into two real polynomials $p_0$ and $p_1$ such that $ip(ix)=p_0(x)+ip_1(x)$ for all $x\in\Co$.
Namely,
\begin{equation}\label{eq-decomp-mp-odd}
\begin{array}{l}
p_0:x\mapsto i^{d+1} x^d + i^{d-1}c_2x^{d-2}+\ldots+i^2c_{d-1}x, \\
p_1:x\mapsto i^{d-1}c_1x^{d-1}+i^{d-3}c_3x^{d-3}+\ldots+c_d.
\end{array}
\end{equation}
In both cases, it holds that $\deg\,p_0>\deg\,p_1$.

Let $p_0,p_1,\ldots,p_{m+1}$ satisfy \eqref{eq-seq-poly} with $p_0$ and $p_1$ given by \eqref{eq-decomp-mp-even} or \eqref{eq-decomp-mp-odd}.
If $p_m$ is not a constant polynomial, then let%
\footnote{The superscript ``ext'' stands for ``extended'' because we extend the sequence $p_0,p_1,\ldots,p_{m+1}$.}
$\pext_0,\allowbreak\pext_1,\allowbreak\ldots,\pext_{\mext+1}$ satisfy \eqref{eq-seq-poly} with $\pext_0=p_m$ and $\pext_1=-p_m'$.
The following result links the boundedness property with the variations of sign in the sequences $p_0,p_1,\ldots,p_m$ and $\pext_0,\allowbreak\pext_1,\ldots,\pext_{\mext}$.

\begin{lemma}\label{lem-routh-hurwitz-boundedness}
Let $p_0,\ldots,p_{m+1}$ and $\pext_0,\allowbreak\ldots,\allowbreak\pext_{\mext+1}$ be as above.
Then, $p$ has the boundedness property if and only if
\begin{equation}\label{eq-sum-variations}
V(+\infty)-V(-\infty)+\Vext(+\infty)-\Vext(-\infty)=\deg\,p_0,
\end{equation}
where $V(y)$ and $\Vext(y)$ are the number of variations of sign in the sequences $p_0(y),\allowbreak\ldots,\allowbreak p_m(y)$ and $\pext_0(y),\allowbreak\ldots,\allowbreak\pext_{\mext}(y)$.
\end{lemma}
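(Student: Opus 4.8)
The plan is to factor $p$ into two polynomials whose root configurations are controlled separately by Lemmas~\ref{lem-routh-hurwitz-regular} and~\ref{lem-sturm-real-roots}, and then to add the two corresponding sign‑variation counts.

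\emph{Step 1: a factorization of $p$.} By construction, \eqref{eq-decomp-mp-even} (resp.\ \eqref{eq-decomp-mp-odd}) is the polynomial identity $p(ix)=p_0(x)+ip_1(x)$ (resp.\ $ip(ix)=p_0(x)+ip_1(x)$); substituting $x=-iz$ yields $p(z)=u\bigl(p_0(-iz)+ip_1(-iz)\bigr)$ for a nonzero constant $u\in\{1,-i\}$. By Lemma~\ref{lem-gcd-euclidean-algorithm}, $p_m$ is a real polynomial dividing both $p_0$ and $p_1$; with $\pred_0=p_0/p_m$, $\pred_1=p_1/p_m$, and $\tilde p$ the polynomial $x\mapsto\pred_0(-ix)+i\pred_1(-ix)$ of Lemma~\ref{lem-routh-hurwitz-regular}, this gives
\[
p(z)=u\,p_m(-iz)\,\tilde p(z).
\]
In particular the roots of $p$ are those of $\tilde p$ together with those of $z\mapsto p_m(-iz)$, i.e.\ the points $iw$ with $p_m(w)=0$, multiplicities adding; and $\deg\tilde p=\deg p_0-\deg p_m$ since $\deg\pred_0>\deg\pred_1$.

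\emph{Step 2: a structural description of the boundedness property.} I would prove two facts. (a) $\tilde p$ has no root on the imaginary axis: if $\tilde p(i\omega)=0$ with $\omega\in\Re$, then $\pred_0(\omega)=\pred_1(\omega)=0$ (these being real numbers), contradicting that $\pred_0$ and $\pred_1$ are coprime, which follows from $p_m=\gcd(p_0,p_1)$. (b) For any $w\in\Co$ the multiplicity of $iw$ as a root of $z\mapsto p_m(-iz)$ equals the multiplicity of $w$ as a root of $p_m$, so by (a) the multiplicity of $iw$ in $p$ equals that of $w$ in $p_m$ whenever $w\in\Re$. Since $p_m$ is real, its non‑real roots occur in conjugate pairs, and any root $w_0$ of $p_m$ with $\imagp(w_0)<0$ produces a root $iw_0$ of $p$ with $\realp(iw_0)=-\imagp(w_0)>0$. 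Combining (a), (b), and this last remark, $p$ has the boundedness property if and only if: (i) every root of $\tilde p$ has negative real part; and (ii) $p_m$ has only real roots, all simple. (For the ``if'' direction: the roots of $p$ are then the negative‑real‑part roots of $\tilde p$ together with the points $iw$, $w$ a simple real root of $p_m$, which lie on the imaginary axis and are simple by (a),(b).)

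\emph{Step 3: translation into sign variations.} Let $\tau_s$ and $\tau_u$ be the numbers of roots of $\tilde p$ with negative, resp.\ positive, real part. By (a), $\tau_s+\tau_u=\deg\tilde p=\deg p_0-\deg p_m$, and Lemma~\ref{lem-routh-hurwitz-regular} gives $V(+\infty)-V(-\infty)=\tau_s-\tau_u\le\deg p_0-\deg p_m$, with equality precisely when (i) holds. Applying Lemma~\ref{lem-sturm-real-roots} to $\pext_0=p_m$, $\pext_1=-p_m'$, the quantity $\Vext(+\infty)-\Vext(-\infty)$ equals the number of distinct real roots of $p_m$, which is at most $\deg p_m$, with equality precisely when (ii) holds. (If $p_m$ is constant, (ii) is vacuous, there is no $\pext$‑sequence, and the corresponding difference is set to $0=\deg p_m$.) Adding the two inequalities yields $V(+\infty)-V(-\infty)+\Vext(+\infty)-\Vext(-\infty)\le\deg p_0$, with equality if and only if both (i) and (ii) hold, i.e.\ if and only if $p$ has the boundedness property --- which is exactly \eqref{eq-sum-variations}.

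The step I expect to be the main obstacle is the multiplicity bookkeeping in Step 2: verifying that every imaginary‑axis root of $p$ comes from a real root of $p_m$ with the same multiplicity, and using the conjugate symmetry of $p_m$ to exclude positive‑real‑part roots. The final recombination in Step 3 is then just arithmetic on the sign‑variation counts.
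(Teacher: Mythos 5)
Your proposal is correct and follows essentially the same route as the paper's proof: factor $p$ (up to a unit) as $p_m(-iz)$ times the reduced polynomial $\pred_0(-iz)+i\pred_1(-iz)$, apply Lemma~\ref{lem-routh-hurwitz-regular} to the latter and Lemma~\ref{lem-sturm-real-roots} to $\pext_0=p_m$, and observe that the sum of the two sign-variation differences attains $\deg p_0$ exactly when each summand attains its own maximum. Your Step~2 merely makes explicit the multiplicity bookkeeping and the conjugate-pair argument that the paper states more briefly.
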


\begin{proof}
\extended{See Appendix \ref{ssec-lem-routh-hurwitz-boundedness-proof}.}%
\short{See the extended version of this paper \cite{berger2021complexity}.}
\end{proof}

From the above lemma, we obtain the following necessary condition for the satisfiability of the boundedness property.

\begin{corollary}\label{cor-routh-hurwitz-boundedness-degree-dec}
Let $p_0,\ldots,p_{m+1}$ and $\pext_0,\allowbreak\ldots,\allowbreak\pext_{\mext+1}$ be as above.
A \emph{necessary} condition for $p$ to have the boundedness property is that the degree difference between two consecutive polynomials is equal to one: i.e., $\deg\,p_k=\deg\,p_{k-1}-1$ for every $k\in\{1,\ldots,m\}$, and $\deg\,\pext_k=\deg\,\pext_{k-1}-1$ for every $k\in\{1,\ldots,\mext\}$.
\end{corollary}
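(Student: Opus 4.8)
The plan is to derive Corollary~\ref{cor-routh-hurwitz-boundedness-degree-dec} as a contrapositive consequence of Lemma~\ref{lem-routh-hurwitz-boundedness}: I will show that if the degree drop between two consecutive polynomials in either Euclidean sequence exceeds one, then the left-hand side of \eqref{eq-sum-variations} is strictly less than $\deg\,p_0$, so the boundedness property must fail. First I would record the elementary fact that along any sequence satisfying \eqref{eq-seq-poly}, the degrees are strictly decreasing, so $\deg\,p_0 > \deg\,p_1 > \ldots > \deg\,p_m$, and likewise $\deg\,\pext_0 > \ldots > \deg\,\pext_{\mext}$. In particular $m \leq \deg\,p_0$ and $\mext \leq \deg\,\pext_0 = \deg\,p_m$, with equality in the first bound if and only if \emph{every} consecutive degree drop in the first sequence equals one (and similarly for the extended sequence).

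Next I would bound $V(+\infty)-V(-\infty)$ and $\Vext(+\infty)-\Vext(-\infty)$ by counting. The quantity $V(y)$ is the number of sign variations in a sequence of $m+1$ entries, hence $V(y) \in \{0,1,\ldots,m\}$ for every $y$; therefore $V(+\infty)-V(-\infty) \leq m$, with the obvious analogue $\Vext(+\infty)-\Vext(-\infty) \leq \mext$. Combining these with the degree bounds from the previous step gives
\begin{equation*}
V(+\infty)-V(-\infty)+\Vext(+\infty)-\Vext(-\infty) \;\leq\; m+\mext \;\leq\; \deg\,p_0 + \deg\,p_m .
\end{equation*}
This is not yet sharp enough, because of the extra term $\deg\,p_m$. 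The key refinement is that $p_1 = -p_0'$ is the derivative construction used only for the \emph{extended} sequence, whereas for the first sequence $p_0,p_1$ come from the splitting \eqref{eq-decomp-mp-even}--\eqref{eq-decomp-mp-odd}, so $\deg\,p_1 = \deg\,p_0 - 1$ automatically; and $\pext_0 = p_m$, $\pext_1 = -p_m'$, so $\deg\,\pext_1 = \deg\,p_m - 1$ automatically as well. Thus the \emph{first} gap in each sequence is always exactly one and costs us nothing; what I really need is that $m = \deg\,p_0$ forces all \emph{internal} gaps to be one, and $\mext = \deg\,p_m$ likewise.

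So I would argue as follows. Suppose the boundedness property holds. By Lemma~\ref{lem-routh-hurwitz-boundedness}, equation \eqref{eq-sum-variations} holds, i.e.\ $V(+\infty)-V(-\infty)+\Vext(+\infty)-\Vext(-\infty)=\deg\,p_0$. Since $V(+\infty)-V(-\infty) \leq m \leq \deg\,p_0$ and $\Vext(+\infty)-\Vext(-\infty) \leq \mext \leq \deg\,p_m$, and since moreover $\deg\,p_m$ is bounded above by $\deg\,p_0 - m$ (because each of the $m$ degree drops in the first sequence is at least one, so $\deg\,p_m \leq \deg\,p_0 - m$, and conversely $\deg\,p_m \geq \deg\,p_0 - \sum (\text{drops})$, hmm — I should instead use $\sum_{k=1}^{m}(\deg\,p_{k-1}-\deg\,p_k) = \deg\,p_0 - \deg\,p_m$ with each summand $\geq 1$, giving $m \leq \deg\,p_0 - \deg\,p_m$), we get
\begin{equation*}
\deg\,p_0 = V(+\infty)-V(-\infty)+\Vext(+\infty)-\Vext(-\infty) \leq m + \mext \leq (\deg\,p_0-\deg\,p_m) + \mext \leq \deg\,p_0,
\end{equation*}
using $\mext \leq \deg\,\pext_0 = \deg\,p_m$ in the last step. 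Hence every inequality is an equality: in particular $m = \deg\,p_0 - \deg\,p_m$, which forces each degree drop $\deg\,p_{k-1}-\deg\,p_k$ in the first sequence to equal exactly $1$; and $\mext = \deg\,p_m$, which forces each degree drop in the extended sequence to equal exactly $1$ as well. (If $p_m$ is constant, the extended sequence is not formed and only the first conclusion is claimed; the same computation applies with the extended terms omitted.) This is exactly the asserted necessary condition, so the corollary follows by contraposition.

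The main obstacle is bookkeeping the degree inequality chain correctly — specifically making sure the two separate ``$\leq$'' streams (number of sign variations bounded by sequence length, and sequence length bounded by total degree drop) are lined up so that forcing equality everywhere yields precisely ``all gaps equal one'' in \emph{both} sequences, and handling the degenerate case where $p_m$ is constant (so $\mext$ and the extended sequence are undefined) without a separate argument. I expect no genuinely hard analysis here: the content is entirely in Lemma~\ref{lem-routh-hurwitz-boundedness}, and the corollary is a short counting argument on top of it.
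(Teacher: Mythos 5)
Your argument is correct and is essentially the paper's own proof: both rest on the two counting bounds $V(+\infty)-V(-\infty)\leq m\leq\deg p_0-\deg p_m$ and $\Vext(+\infty)-\Vext(-\infty)\leq\mext\leq\deg\pext_0$, so that any degree gap exceeding one makes the left-hand side of \eqref{eq-sum-variations} strictly smaller than $\deg p_0$. The only difference is presentational — the paper states the contrapositive directly, while you force equality through the inequality chain — and your bookkeeping (including the telescoping correction and the constant-$p_m$ case) is sound.
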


\begin{proof}
\extended{See Appendix \ref{ssec-cor-routh-hurwitz-boundedness-degree-dec-proof}.}%
\short{See the extended version of this paper \cite{berger2021complexity}.}
\end{proof}

We are now able to prove Theorem \ref{thm-analyze-roots-polynomial} (see below).
For this, we use Lemma \ref{lem-routh-hurwitz-boundedness}, which requires to compute $V(+\infty)-V(-\infty)$ and $\Vext(+\infty)-\Vext(-\infty)$.%
\footnote{The naive way to do this would be to compute the polynomials $p_0,\allowbreak\ldots,\allowbreak p_m$ and $\pext_0,\allowbreak\ldots,\allowbreak\pext_{\mext}$, and look at the variations of sign in the associated sequences.
However, a proof of Theorem \ref{thm-analyze-roots-polynomial} based on this would require to show that the computation of these polynomials can be done in polynomial time, which is long and tedious (see, e.g., \cite[\S6]{von2013modern}).
Therefore, we use another approach, based on the ``Hurwitz determinants''.}
We will see that this can be achieved by computing the determinants of matrices built from the coefficients of $p_0$ and $p_1$.
This is the idea of the ``Hurwitz determinants'' obtained from the ``Hurwitz matrix'' (see, e.g., \cite[\S15.6]{gantmacher2000thetheory}).
By combining it with Proposition \ref{pro-compute-determinant}, we deduce that this can be done in polynomial time.

\begingroup
\renewcommand{\myproofname}{Proof of Theorem \ref{thm-analyze-roots-polynomial}}
\begin{proof}
First, we explain how to compute $V(+\infty)\allowbreak-V(-\infty)$ and $\pext_0$; then we apply the exact same idea to compute $\Vext(+\infty)-\Vext(-\infty)$.

Let $p_0$ and $p_1$ be as in \eqref{eq-decomp-mp-even} or \eqref{eq-decomp-mp-odd}.
For definiteness, suppose that we are in the case of \eqref{eq-decomp-mp-even} (i.e., the degree of $p$ is even); the case of \eqref{eq-decomp-mp-odd} is exactly the same.
Let $d=\deg\,p=2f$.
If $\deg\,p_1<d-1$, then $p$ does not satisfy the necessary condition of Corollary \ref{cor-routh-hurwitz-boundedness-degree-dec} so that there is no need for further computations.
Thus, we assume that $\deg\,p_1=d-1$.
Denote the coefficients of $p_0$ and $p_1$ by%
\footnote{In the rest of this subsection, for the sake of readability, superscripts are used both as exponents and as indexes, but the distinction should be clear from the context; e.g., $a^0_0$, $\calM^1$ (\emph{index}) vs.\ $x^d$ (\emph{exponent}).}
\begin{equation}\label{eq-coeffs-p0-p1-hurwitz}
\begin{array}{l}
p_0:x\mapsto a^0_0x^d+a^0_1x^{d-2}+\ldots+a^0_f, \\
p_1:x\mapsto a^1_0x^{d-1}+a^1_1x^{d-3}+\ldots+a^1_{f-1}x.
\end{array}
\end{equation}
Consider the following $(d+1)\times(d+1)$ matrix:
\[
\scalebox{0.85}{$
\renewcommand{\arraystretch}{1.35}
\calM=\left[\begin{array}{@{\;}c*{9}{@{\hspace{5pt}}c}}
a^0_0 & a^0_1 & a^0_2 & {\;\cdots\;} & a^0_f \\
& a^1_0 & a^1_1 & \cdots & a^1_{f-1} \\
& a^0_0 & a^0_1 & \cdots & a^0_{f-1} & a^0_f \\
& & a^1_0 & \cdots & a^1_{f-1} & a^1_{f-1} \\
& & a^0_0 & \cdots & a^0_{f-2} & a^0_{f-1} & a^0_f \\
& & & \ddots & & & & \ddots \\
& & & & a^1_0 & a^1_1 & a^1_2 & {\;\cdots\;} & a^1_{f-1} \\
& & & & a^0_0 & a^0_1 & a^0_2 & \cdots & a^0_{f-1} & a^0_f \\
\end{array}\!\right].$}
\]
Let $q_1:x\mapsto b_1x$ be such that $\deg(q_1p_1-p_0)<\deg\,p_1$.
By \eqref{eq-coeffs-p0-p1-hurwitz}, it is equivalent to asking that $b_1^{}a^1_0-a^0_0=0$.
Hence, if, for each $k\in\{3,5,7,\ldots,d+1\}$, we transform $\calM_{[k,:]}$ (the $k$th row of $\calM$) into $b_1\calM_{[k-1,:]}-\calM_{[k,:]}$, then we get the following matrix:
\[
\scalebox{0.85}{$
\renewcommand{\arraystretch}{1.35}
\calM^1=\left[\begin{array}{@{\;}c*{11}{@{\hspace{5pt}}c}}
a^0_0 & a^0_1 & a^0_2 & a^0_3 & {\;\cdots\;} & a^0_f \\
& a^1_0 & a^1_1 & a^1_2 & \cdots & a^1_{f-1} \\
& & a^2_0 & a^2_1 & \cdots & a^2_{f-2} & a^0_{f-1} \\
& & a^1_0 & a^1_1 & \cdots & a^1_{f-1} & a^1_{f-1} \\
& & & a^2_0 & \cdots & a^2_{f-2} & a^0_{f-1} \\
& & & & \ddots & & & & \ddots \\
& & & & & a^1_0 & a^1_1 & a^1_2 & {\;\cdots\;} & a^1_{f-1} \\
& & & & & & a^2_0 & a^2_1 & \cdots & a^2_{f-2} & a^0_{f-1} \\
\end{array}\!\right].$}
\]
Any $k$th row of $\calM^1$, with $k\in\{3,5,7,\ldots,d+1\}$, gives the coefficients of the polynomial $p_2$, defined by $p_2=q_1p_1-p_0$.
Namely, $p_2:x\mapsto a^2_0x^{d-2}+a^2_1x^{d-4}+\ldots+a^2_{f-1}$.
The interest of this approach is that to compute the sign of the coefficients $a^2_0,\ldots,a^2_{f-1}$, we do not need to compute $\calM^1$, it suffices to compute the determinant of a submatrix of $\calM$.
More precisely, for all $\ell\in\{0,\ldots,f-1\}$, it holds that
\[
\det\,\calM^1_{[1:3,1:2\cup\{\ell+3\}]}=a^0_0a^1_0a^2_\ell=-\det\,\calM_{[1:3,1:2\cup\{\ell+3\}]},
\]
since $\calM_{[1:3,:]}^1$ was obtained from $\calM_{[1:3,:]}^{}$ by using the row transformation $\calM^1_{[3,:]}\coloneqq b_1\calM_{[2,:]}-\calM_{[3,:]}$.

In the same way as above, for each $k\in\{4,6,8,\ldots,d\}$, we can eliminate the first element of $\calM^1_{[k,:]}$ by transforming $\calM^1_{[k,:]}$ into $b_2\calM^1_{[k-1,:]}-\calM^1_{[k,:]}$, where $q_2:x\mapsto b_2x$ is such that $\deg(q_2p_2-p_1)<\deg\,p_1$.
This will give a matrix $\calM^2$ containing \emph{among others} the coefficients of the polynomial $p_3:x\mapsto a^3_0x^{d-3}+a^3_1x^{d-5}+\ldots+a^3_{f-2}$ defined by $p_3=q_2p_2-p_1$.
From this, we get that, for all $\ell\in\{0,\ldots,f-2\}$,
\[
a^0_0a^1_0a^2_0a^3_\ell=\det\,\calM_{[1:4,1:3\cup\{\ell+4\}]}.
\]
By using the same reasoning inductively, we get the general relation: for all $k\in\{0,\ldots,d\}$ and $\ell\in\{0,\ldots,f-\lceil k/2\rceil\}$,
\begin{equation}\label{eq-coeffs-Hurwitz-determinants}
a^0_0\cdots a^{k-1}_0a^k_\ell = \sigma_k \det\,\calM_{[1:k+1,1:k\cup\{\ell+k+1\}]},
\end{equation}
where $\sigma_k=-1$ if $k\in4\ZZb+2$ and $\sigma_k=1$ otherwise, and $p_k:x\mapsto a^k_0x^{d-k}+a^k_1x^{d-k-2}+\ldots+a^k_{f-\lceil k/2\rceil}x^{k\,\mathrm{mod}\,2}$ is the $k$th polynomial in the sequence obtained by \eqref{eq-seq-poly} with $p_0$ and $p_1$ as in \eqref{eq-coeffs-p0-p1-hurwitz}.

If $\det\,M_{[1:k+1,1:k+1]}=0$ for some $k\in\{1,\ldots,d\}$, this means that $a^k_0=0$.
In this case, two situations can occur:
\begin{itemize}[\nocalcleftmargintrue\leftmargin=25pt]
\item[(S1)] $\det\,\calM_{[1:k+1,1:k\cup\{\ell+k+1\}]}=0$ for all $\ell\in\{1,\ldots,f-\lceil k/2\rceil\}$.
This means that $p_k\equiv0$.
\item[(S2)] $\det\,\calM_{[1:k+1,1:k\cup\{\ell+k+1\}]}\neq0$ for some $\ell\in\{1,\ldots,\allowbreak f-\lceil k/2\rceil\}$.
This means that $p_k\not\equiv0$ but $\deg\,p_k<\deg\,p_{k-1}-1$.
\end{itemize}

The above leads to the following algorithm for the computation of $V(+\infty)-V(-\infty)+\Vext(+\infty)-\Vext(-\infty)$.

\underline{\smash{Algorithm}:} Using Proposition \ref{pro-compute-determinant}, we compute the determinant of $\calM_{[1:k+1,1:k+1]}$ for $k=0,1,\ldots,d$.
If the determinant is nonzero for all $k\in\{0,\ldots,d\}$, then we deduce from \eqref{eq-coeffs-Hurwitz-determinants} the signs of the leading coefficients $a^0_0,\ldots,a^d_0$.
We verify whether these signs are strictly \emph{alternating} since this is the only way to have $V(+\infty)-V(-\infty)=d$.
If this is the case, we stop the algorithm and output ``YES'' since $p$ satisfies \eqref{eq-sum-variations} in Lemma \ref{lem-routh-hurwitz-boundedness}.
Otherwise, we stop the algorithm and output ``NO'' since $p$ does not satisfy \eqref{eq-sum-variations} in Lemma \ref{lem-routh-hurwitz-boundedness}.

On the other hand, if, for some $k\in\{1,\ldots,d\}$, the determinant of $\calM_{[1:k+1,1:k+1]}$ is zero, then we check whether we are in situation (S1) or (S2) above.
If we are in (S2), then it means that $p$ does not satisfy the necessary condition of Corollary \ref{cor-routh-hurwitz-boundedness-degree-dec}, and thus, we stop the algorithm and output ``NO''.
Otherwise (we are in (S1)), we let $m$ be the smallest $k$ such that $\det\,\calM_{[1:k+2,1:k+2\}]}=0$, and from \eqref{eq-coeffs-Hurwitz-determinants} we compute the sign of $a^0_0,\ldots,a^m_0$ (the leading coefficients of $p_0,\ldots,p_m$).
If the signs are not strictly \emph{alternating}, then we stop the algorithm and output ``NO'' since $p$ does not satisfy \eqref{eq-sum-variations} in Lemma \ref{lem-routh-hurwitz-boundedness}.
Otherwise, from \eqref{eq-coeffs-Hurwitz-determinants} with $k=m$ and $\ell=0,\ldots,f-\lceil m/2\rceil$, we define $\pext_0=\lvert a^0_0\cdots a^{m-1}_0\rvert p_m$.

At this stage, if the algorithm did not stop and outputted ``Yes'' or ``NO'', then it produced the polynomial $\pext_0=\alpha p_m$ with $\alpha=\lvert a^0_0\cdots a^{m-1}_0\rvert>0$.
From $\pext_0$, we compute the value of $\Vext(+\infty)-\Vext(-\infty)$ as follows.
First, we define $\pext_1=-(\pext_0)'$.
Then, in the same way as above, we compute the signs of $a^{\text{ext},0}_0,\ldots,a^{\text{ext},\mext}_0$ (the leading coefficients of $\pext_0,\allowbreak\ldots,\pext_{\mext}$).
If $\mext=d-m$ and the signs are strictly \emph{alter\-nating}, then we stop the algorithm and output ``YES'' since $p$ satisfies \eqref{eq-sum-variations} in Lemma \ref{lem-routh-hurwitz-boundedness}.
Otherwise, we stop the algorithm and output ``NO'' since $p$ does not satisfy \eqref{eq-sum-variations} in Lemma \ref{lem-routh-hurwitz-boundedness}.\hfill~$\triangleleft$

The above algorithm requires at most $d^2$ computations of the determinant of a submatrix of $\calM$.
By Proposition \ref{pro-compute-determinant}, these determinants can be computed in polynomial time w.r.t.\ the bit size of the entries of $\calM$.
Since these entries consist in the coefficients of the input polynomial $p$, this concludes the proof of the theorem.
\end{proof}
\endgroup

\section{Proof of Theorem \ref{thm-boundedness-disc}}\label{sec-proof-thm-boundedness-disc}

The polynomial-time algorithm to answer Problem \ref{prob-boundedness-cont} presented in the previous section (see Figure \ref{algo-prob-boundedness-cont}) can be easily adapted, by adding an intermediate step between Step 1 and Step 2, to obtain a polynomial-time algorithm for Problem \ref{prob-boundedness-disc}.
The intermediate step consists in a transformation of the min\-imal polynomial of the matrix, called a \emph{M\"obius transformation}, which maps the interior of the unit circle in the complex plane to the interior of the left-hand side plane.
The relevance of this transformation is explained in Theorem \ref{thm-roots-minimal-polynomial-boundedness-disc} below.

\begin{definition}\label{def-boundedness-property-disc}
A (complex or real) polynomial will be said to have the \emph{discrete-time boundedness property} if each of its roots satisfies one of the following two conditions: (i) is in the interior of the unit circle, or (ii) is on the unit circle and is simple.
\end{definition}

\begin{theorem}[Folk]\label{thm-roots-minimal-polynomial-boundedness-disc}
For any $A\in\Re^\nxn$, it holds that $\sup_{t\in\ZZb_{\geq0}}\,\lVert A^t\rVert<\infty$ if and only if the minimal polynomial of $A$ has the discrete-time boundedness property.
\end{theorem}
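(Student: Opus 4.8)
The plan is to prove this folk theorem in direct parallel with Theorem \ref{thm-roots-minimal-polynomial-boundedness-cont}, namely by reducing to the Jordan normal form. First I would write $A=PJP^{-1}$ with $J$ the Jordan normal form of $A$ over $\Co$. Since $P$ is a fixed invertible matrix, $\lVert A^t\rVert=\lVert PJ^tP^{-1}\rVert\leq\lVert P\rVert\,\lVert P^{-1}\rVert\,\lVert J^t\rVert$ and symmetrically, so $\sup_{t}\lVert A^t\rVert<\infty$ if and only if $\sup_t\lVert J^t\rVert<\infty$; and because $J$ is block diagonal with Jordan blocks $J_1,\ldots,J_N$ (and $J^t$ is block diagonal with blocks $J_k^t$), this holds if and only if $\sup_t\lVert J_k^t\rVert<\infty$ for every $k$.

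Next I would analyze a single Jordan block $J_k=\lambda I+N$ of size $s$, where $N$ is the nilpotent shift with $N^s=0$. Using the binomial expansion (the two summands commute), $J_k^t=\sum_{j=0}^{s-1}\binom{t}{j}\lambda^{t-j}N^j$, with the convention $\binom{t}{j}=0$ for $t<j$; hence the entries of $J_k^t$ are exactly the numbers $\binom{t}{j}\lambda^{t-j}$ for $0\leq j\leq s-1$ (each appearing on a diagonal of $J_k^t$), so $\sup_t\lVert J_k^t\rVert<\infty$ if and only if $\sup_t\binom{t}{j}\lvert\lambda\rvert^{t-j}<\infty$ for all $j\in\{0,\ldots,s-1\}$. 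I would then distinguish three cases: if $\lvert\lambda\rvert<1$, then $\binom{t}{j}\lvert\lambda\rvert^{t-j}\to0$ (polynomial growth beaten by geometric decay), so the block is bounded; if $\lvert\lambda\rvert>1$, the $j=0$ term $\lvert\lambda\rvert^{t}$ diverges, so the block is unbounded; if $\lvert\lambda\rvert=1$, then $\binom{t}{j}\lvert\lambda\rvert^{t-j}=\binom{t}{j}$, which is bounded in $t$ precisely when $j=0$, i.e.\ when $s=1$. Conclusion: $\sup_t\lVert J_k^t\rVert<\infty$ if and only if the eigenvalue $\lambda$ of $J_k$ lies in the interior of the unit circle, or lies on the unit circle and $J_k$ has size one.

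It then remains to translate this into a statement about the minimal polynomial. Recall that the minimal polynomial of $A$ equals $\prod_{\mu}(x-\mu)^{s(\mu)}$, the product over the distinct eigenvalues $\mu$ of $A$, where $s(\mu)$ is the size of the largest Jordan block associated with $\mu$; in particular $\mu$ is a simple root of the minimal polynomial if and only if $s(\mu)=1$, which holds if and only if every Jordan block of $A$ for $\mu$ has size one. Combining this with the per-block analysis above, $\sup_t\lVert A^t\rVert<\infty$ if and only if every eigenvalue $\mu$ of $A$ either satisfies $\lvert\mu\rvert<1$, or satisfies $\lvert\mu\rvert=1$ and is a simple root of the minimal polynomial; noting also that an eigenvalue with $\lvert\mu\rvert>1$ is a root of the minimal polynomial lying neither inside nor on the unit circle (consistent with $\lVert A^t\rVert$ being unbounded), this is exactly the condition that the minimal polynomial has the discrete-time boundedness property of Definition \ref{def-boundedness-property-disc}.

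The only mildly delicate point is the per-block norm estimate in the second paragraph — establishing the equivalence with boundedness of all the scalar sequences $\binom{t}{j}\lvert\lambda\rvert^{t-j}$ and, in particular, pinning down the borderline case $\lvert\lambda\rvert=1$ with $s\geq2$ — but this is elementary, and the rest is bookkeeping with the Jordan form. As an alternative route, one could instead deduce the result from Theorem \ref{thm-roots-minimal-polynomial-boundedness-cont} by conjugating $A$ with a Cayley/M\"obius transform mapping the unit disk to the open left half-plane (mirroring the reduction used later in Section \ref{sec-proof-thm-boundedness-disc} at the level of polynomials), but the direct Jordan-form argument is shorter and self-contained.
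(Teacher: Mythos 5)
Your proposal is correct and follows the same route as the paper: reduce to the Jordan normal form, use the standard fact that the minimal polynomial is $\prod_\mu(x-\mu)^{s(\mu)}$ with $s(\mu)$ the size of the largest Jordan block for $\mu$, and analyze the growth of $(J_n(\lambda))^t=\sum_{k=0}^{n-1}\binom{t}{k}\lambda^{t-k}(J_n(0))^k$ case by case. The paper's proof is just a terse sketch of exactly this argument (deferring to the continuous-time analogue), so your write-up simply supplies the details it omits.
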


\begin{proof}
\extended{See Appendix \ref{ssec-thm-roots-minimal-polynomial-boundedness-disc-proof}.}%
\short{See the extended version of this paper \cite{berger2021complexity}.}
\end{proof}

\begin{theorem}\label{thm-analyze-roots-polynomial-disc}
There is an algorithm that, given any polynomial $\pdisc:x\mapsto a_0x^d+\ldots+a_d$ with integer coefficients $a_0,\allowbreak\ldots,\allowbreak a_d$, outputs a polynomial $p$ such that $\pdisc$ has the discrete-time boundedness property if and only if $p$ has the boundedness property.
Moreover, the bit complexity of the algorithm is polynomial in $\sum_{\ell=0}^d\bitsize(a_\ell)$.
\end{theorem}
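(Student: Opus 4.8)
The plan is to carry out the Möbius transformation announced above. The map $z \mapsto (z-1)/(z+1)$ sends the open unit disk onto the open left half-plane, the unit circle onto the imaginary axis (with $-1 \mapsto \infty$), and has inverse $w \mapsto (1+w)/(1-w)$. Given the input $\pdisc(x) = a_0 x^d + \dots + a_d$ --- where we may assume $a_0 \neq 0$, since deleting leading zero coefficients changes neither the roots nor the discrete-time boundedness property --- I would form
\[
\tilde{p}(x) \eqdef (1-x)^d\, \pdisc\!\left(\frac{1+x}{1-x}\right) = \sum_{k=0}^{d} a_k\, (1+x)^{d-k}(1-x)^{k},
\]
which has integer coefficients. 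Writing $\pdisc(z) = a_0 \prod_{j=1}^d (z - z_j)$ and using $\frac{1+x}{1-x} - z_j = \frac{(1+z_j)x + (1-z_j)}{1-x}$ gives the factored form $\tilde{p}(x) = a_0\, 2^{\mu} \prod_{j \,:\, z_j \neq -1} \bigl((1+z_j)x + (1-z_j)\bigr)$, where $\mu$ denotes the multiplicity of $-1$ as a root of $\pdisc$; hence $\deg \tilde{p} = d - \mu$, and, a Möbius map being injective, the roots of $\tilde{p}$ counted with multiplicity are exactly the images $w_j = (z_j-1)/(z_j+1)$ of the roots $z_j \neq -1$ of $\pdisc$. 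The computation that makes this useful is the identity
\[
\realp\!\left(\frac{z-1}{z+1}\right) = \frac{|z|^2 - 1}{|z+1|^2} \qquad (z \neq -1),
\]
showing that $w_j$ lies in the open left half-plane, on the imaginary axis, or in the open right half-plane according as $z_j$ lies strictly inside, on, or strictly outside the unit circle.

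The delicate point --- and the main obstacle --- is the root at $z = -1$, which the transformation ``sends to infinity'': $\tilde{p}$ retains no trace of the multiplicity $\mu$, yet a root at $-1$ of multiplicity $\geq 2$ is a non-simple root on the unit circle and so violates the discrete-time boundedness property. I would therefore branch on the value of $\mu$, which is available as $d - \deg \tilde{p}$ (equivalently, from testing whether $\pdisc(-1)$ and $\pdisc'(-1)$ vanish). If $\mu \geq 2$, then $\pdisc$ does not have the discrete-time boundedness property, and I output $p(x) = x - 1$, whose only root lies in the open right half-plane, so $p$ does not have the boundedness property either; the claimed equivalence then holds with both sides false. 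If $\mu \leq 1$, I output $p = \tilde{p}$: here the roots of $\tilde{p}$ are precisely the images of the roots $\neq -1$ of $\pdisc$, and when $\mu = 1$ the root $-1$ is simple and on the unit circle --- hence allowed by the discrete-time boundedness property without any further effect --- so the correspondence above shows that $\pdisc$ has the discrete-time boundedness property exactly when every root $z_j$ lies in the closed unit disk with the unit-circle roots simple, which happens exactly when every root $w_j$ of $\tilde{p}$ lies in the closed left half-plane with the imaginary-axis roots simple (using again that a Möbius map preserves multiplicities), i.e.\ exactly when $\tilde{p}$ has the boundedness property. The symmetric boundary point $z = +1$ needs no special treatment, since it maps to $w = 0$, which stays a root of $\tilde{p}$ of the correct multiplicity.

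For the bit complexity, note $\deg \tilde{p} \leq d \leq \sum_{\ell=0}^{d} \bitsize(a_\ell)$. Each polynomial $(1+x)^{d-k}(1-x)^k$ has integer coefficients of absolute value at most $2^d$, and the whole family can be computed in time polynomial in $d$ (Pascal's triangle plus polynomial multiplication). Consequently the coefficients of $\tilde{p} = \sum_{k=0}^{d} a_k (1+x)^{d-k}(1-x)^{k}$ are bounded in absolute value by $(d+1)\,2^d \max_k |a_k|$ --- of bit size $O\bigl(d + \log d + \max_k \bitsize(a_k)\bigr)$ --- and are computable in time polynomial in $d$ and $\sum_\ell \bitsize(a_\ell)$; evaluating $\pdisc(-1)$, $\pdisc'(-1)$ or reading off $\deg \tilde{p}$ is at least as cheap. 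Hence the whole procedure runs in time polynomial in $\sum_{\ell=0}^{d} \bitsize(a_\ell)$ and outputs a polynomial $p$ with the stated property, completing the proof.
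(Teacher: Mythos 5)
Your proof is correct and follows essentially the same route as the paper's: a M\"obius transformation carrying the unit disk onto the left half-plane, applied to $\pdisc$ after clearing denominators, so that roots correspond one-to-one \emph{with multiplicity} except at the pole of the map. The paper's own proof uses $f:x\mapsto\frac{x+1}{x-1}$ (pole at $1$) while you use $z\mapsto\frac{z-1}{z+1}$ (pole at $-1$); this choice is immaterial, since both exceptional points lie on the unit circle. Where the two arguments genuinely diverge is in the treatment of the root of $\pdisc$ at the pole, which leaves no trace in the transformed polynomial other than a degree drop $\delta$. The paper restores it by multiplying the transformed polynomial by $(x-1)^{\delta}$; but $1$ has positive real part, so for $\delta=1$ the resulting $p$ fails the boundedness property even though a \emph{simple} root of $\pdisc$ at the pole is permitted by the discrete-time boundedness property --- read literally, the paper's formula gives the wrong answer in that case (multiplying by $x^{\delta}$ instead would repair it). Your explicit case split on $\mu=d-\deg\tilde{p}$ --- returning a trivially failing polynomial such as $x-1$ when $\mu\geq2$, and returning $\tilde{p}$ unchanged when $\mu\leq1$, where the simple unit-circle root at the pole is harmlessly discarded --- handles this boundary case correctly; your bit-size accounting for the coefficients of $\sum_k a_k(1+x)^{d-k}(1-x)^k$ is also adequate. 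So your argument is, if anything, the more careful of the two.
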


\begin{proof}
\extended{See Appendix \ref{ssec-thm-analyze-roots-polynomial-disc-proof}.}%
\short{See the extended version of this paper \cite{berger2021complexity}.}
\end{proof}

Putting things together, we get the polynomial-time algorithm presented in Figure \ref{algo-prob-boundedness-disc} to answer Problem \ref{prob-boundedness-disc}.

\begin{figure}
\hrule
\vskip3pt
\noindent{\bfseries Input:} $A=B/q$, with $A\in\ZZb^\nxn$ and $q\in\ZZb_{>0}$.

\noindent{\bfseries Output:} ``YES'' if $A$ is a positive instance of Problem \ref{prob-boundedness-disc} and ``NO'' otherwise.

\noindent\emph{Algorithm:}

\noindent$\triangleright$\kern3pt{\bfseries Step\kern2pt 1:}\kern3pt Using Theorem \ref{thm-compute-minimal-polynomial}, compute integers $e_0\neq0$ and $e_1,\ldots,e_d$ such that $x\mapsto x^d+e_1x^{d-1}/e_0+\ldots+e_d/e_0$ is the minimal polynomial of $B$.\\
Let $\pdisc:x\mapsto e_0q^dx^d+e_1q^{d-1}x^{d-1}+\ldots+e_1q+e_d$.

\noindent$\triangleright$\kern3pt{\bfseries Inter-step:}\kern3pt Using Theorem \ref{thm-analyze-roots-polynomial-disc}, compute a polynomial $p$ that has the boundedness property if and only if $\pdisc$ has the discrete-time boundedness property.

\noindent$\triangleright$\kern3pt{\bfseries Step\kern2pt 2:}\kern3pt Using Theorem \ref{thm-analyze-roots-polynomial}, {\bfseries return} ``YES'' if $p$ has the boun\-dedness property and {\bfseries return} ``NO'' other\-wise.
\vskip3pt
\hrule
\caption{Algorithm for answering Problem \ref{prob-boundedness-disc}.}
\label{algo-prob-boundedness-disc}
\end{figure}

\section{Conclusions}

Summarizing, in this paper, we showed that the problem of deciding whether a linear time invariant dynamical system, with rational transition matrix, has bounded trajectories can be answered in polynomial time with respect to the bit size of the entries of the transition matrix.
To do this, we leveraged several tools from system and control theory and from computer algebra, and we provided a careful analysis of the computational complexity of these tools when integrated into a complete algorithm for our decision problem.

For further work, it would interesting to derive tight upper bounds on the complexity of the described algorithm (and of some improved versions not presented here to keep the paper simple and self-contained), and also to compare it with the complexity that could be obtained with other types of algorithms, like randomized algorithms, which are known to provide practically efficient algorithms, for instance, for the computation of the determinant of integer matrices, or for the computation of the GCD of polynomials with integer coefficients.

%
%
%
%
%
%
%
%
%
%
%
%
%
%
%
%
%
%
%
\bibliographystyle{plain}
\bibliography{myrefs}

\begin{thebibliography}{10}

\bibitem{bareiss1968sylvesters}
Erwin~H Bareiss.
\newblock {Sylvester's} identity and multistep integer-preserving {Gaussian}
  elimination.
\newblock {\em Mathematics of Computation}, 22:565--578, 1968.

\bibitem{bartels1972algorithm}
Richard~H. Bartels and George~W Stewart.
\newblock Algorithm 432: solution of the matrix equation {$AX+ XB= C$}.
\newblock {\em Communications of the ACM}, 15(9):820--826, 1972.

\bibitem{chen1993new}
Shyan~S Chen and Jason~SH Tsai.
\newblock A new tabular form for determining root distribution of a complex
  polynomial with respect to the imaginary axis.
\newblock {\em IEEE transactions on automatic control}, 38(10):1536--1541,
  1993.

\bibitem{choghadi2013routh}
Mohammad~Amin Choghadi and Heidar~A Talebi.
\newblock The routh-hurwitz stability criterion, revisited: the case of
  multiple poles on imaginary axis.
\newblock {\em IEEE Transactions on Automatic Control}, 58(7):1866--1869, 2013.

\bibitem{cox2015ideals}
David~A Cox, John Little, and Donal O'Shea.
\newblock {\em Ideals, varieties, and algorithms: an introduction to
  computational algebraic geometry and commutative algebra}.
\newblock Springer, Cham, 4\textsuperscript{th} edition, 2015.

\bibitem{dumas2006bounds}
Jean-Guillaume Dumas.
\newblock Bounds on the coefficients of the characteristic and minimal
  polynomials.
\newblock {\em arXiv preprint cs/0610136}, 2006.

\bibitem{dumas2005efficient}
Jean-Guillaume Dumas, Cl{\'e}ment Pernet, and Zhendong Wan.
\newblock Efficient computation of the characteristic polynomial.
\newblock In {\em Proceedings of the 2005 International Symposium on Symbolic
  and Algebraic Computation}, pages 140--147. ACM, 2005.

\bibitem{gantmacher2000thetheory}
Felix~R Gantmacher.
\newblock {\em The theory of matrices, {Vol. 2}}.
\newblock American Mathematical Society, Providence, RI, 2000.

\bibitem{horn2013matrix}
Roger~A Horn and Charles~R Johnson.
\newblock {\em Matrix analysis}.
\newblock Cambridge University Press, Cambridge, MA, 2\textsuperscript{nd}
  edition, 2013.

\bibitem{hurwitz1895ueber}
Adolf Hurwitz.
\newblock Ueber die {Bedingungen}, unter welchen eine {Gleichung} nur {Wurzeln}
  mit negativen reellen {Teilen} besitzt.
\newblock {\em Mathematische Annalen}, 46:273--284, 1895.

\bibitem{lienard1914surlesigne}
Alfred-Marie Li\'enard and Henri Chipart.
\newblock Sur le signe de la partie r\'eelle des racines d'une \'equation
  alg\'ebrique.
\newblock {\em Journal de Math\'ematiques Pures et Appliqu\'ees}, 10:291--346,
  1914.

\bibitem{lind1995anintroduction}
Douglas Lind and Brian Marcus.
\newblock {\em An introduction to symbolic dynamics and coding}.
\newblock Cambridge University Press, Cambridge, UK, 1995.

\bibitem{pena2004characterizations}
Juan~M Pe{\~n}a.
\newblock Characterizations and stable tests for the {Routh--Hurwitz}
  conditions and for total positivity.
\newblock {\em Linear Algebra and its Applications}, 393:319--332, 2004.

\bibitem{routh1877treatise}
Edward~John Routh.
\newblock {\em A treatise on the stability of a given state of motion}.
\newblock Macmillan, London, 1877.

\bibitem{von2013modern}
Joachim von~zur Gathen and J{\"u}rgen Gerhard.
\newblock {\em Modern computer algebra}.
\newblock Cambridge University Press, New York, NY, 3\textsuperscript{rd}
  edition, 2013.

\end{thebibliography}

\extended{%

\appendix

\subsection{Sketch of proof of Proposition \ref{pro-compute-determinant}}\label{ssec-pro-compute-determinant-proof}

For the simplicity of notation and without loss of generality%
\footnote{For instance, it suffices to fill in any rectangular matrix with rows or columns of zeros to make it square without changing the assertions of the proposition.}%
, we assume that the matrix is square: $A=(a_{ij})_{i=1,j=1}^{n,n}\in\ZZb^\nxn$.
The algorithm (adapted from \cite{bareiss1968sylvesters}) works as follows.
First, we define $a_{00}^{(-1)} = 1$ and $a_{ij}^{(0)}=a_{ij}$ for each $i,j\in\{1,\allowbreak\ldots,n\}$.
Then, for $k=1,2,\ldots,n$, we define recursively
\begin{equation}\label{eq-def-rec}
a_{ij}^{(k)} = \big(a_{kk}^{(k-1)}a_{ij}^{(k-1)}-a_{kj}^{(k-1)}a_{ik}^{(k-1)}\big)\big/a_{k-1,k-1}^{(k-2)}
\end{equation}
for all $i,j\in\{k+1,\ldots,n\}$.
The formula \eqref{eq-def-rec} is well defined as long as $a_{k-1,k-1}^{(k-2)}\neq0$.
Thus, if for some $k\in\{1,\ldots,n\}$, $a_{k+1,k+1}^{(k)}=0$, then we look whether there are indices $i,j\in\{k+2,\ldots,n\}$ such that $a_{ij}^{(k)}\neq0$.
Two situations can occur: (i) such $i,j$ exist, or (ii) no such $i,j$ exist.

If we are in situation (i), then we define a permutation of the row indices such that the $i$th index becomes the $(k+1)$st index and a permutation of the column indices such that the $j$th index becomes the $(k+1)$st index.
Note that these permutations of the indices do not affect that values of $a_{ij}^{(\ell)}$ for $\ell\in\{-1,\ldots,k-1\}$ and $i,j\in\{\ell+1,\ldots,k\}$.
Hence, we can resume the recurrence \eqref{eq-def-rec} with the new indexing, which satisfies that $a_{k+1,k+1}^{(k)}\neq0$.

On the other hand, if we are in situation (ii), then it means that $k$ is equal to the rank of $A$ and thus we let $r=k$ and we stop the recurrence \eqref{eq-def-rec}.

At this stage of the algorithm, the recurrence equation \eqref{eq-def-rec} holds for all $k\in\{1,\ldots,r\}$ and all $i,j\in\{k+1,\ldots,n\}$ (with the reordering of the indices computed during the recurrence; see situation (i) above).
It is shown in \cite[\S1]{bareiss1968sylvesters} that the iterates of the recurrence \eqref{eq-def-rec} satisfy
\begin{equation}\label{eq-det-recurrence}
a_{ij}^{(k)}=\det\,A_{[1:k\cup\{i\},1:k\cup\{j\}]}.
\end{equation}
for all $k\in\{1,\ldots,r\}$ and all $i,j\in\{k+1,\ldots,n\}$.
Thus, we let $\calR$ be the first $r$ indices (with the reordering of the row indices computed during the recurrence), $\calC$ be the first $r$ indices (with the reordering of the column indices computed during the recurrence) and $D=a_{rr}^{(r-1)}$.

The above discussion shows that the rank of $A$, the subsets $\calR$ and $\calC$, and the determinant $D\coloneqq\det\,A_{[\calR,\calC]}\neq0$ can be computed with a number of arithmetic operations polynomial in $n$, using the recurrence \eqref{eq-def-rec}.
Moreover, by \eqref{eq-det-recurrence}, it holds that the intermediate integers $a_{ij}^{(k)}$ involved in the recurrence are equal to the determinant of submatrices of $A$, and thus their bit is polynomial in $\bitsize(A)$ since the bit size of the determinant of a $k\times k$ submatrix of $A$ is bounded by $k\bitsize(A)+k\bitsize(k)\leq2(\bitsize(A))^2$ (as it is the sum of $k!$ products of $k$ elements of $A$).
This concludes the proof of the proposition.\hspace*{\fill}~\QED


\subsection{Proof of Proposition \ref{pro-solve-linear-system}}\label{ssec-pro-solve-linear-system-proof}

The algorithm works as follows.
Using the algorithm of Proposition \ref{pro-compute-determinant}, we compute $(r,\calR,\calC,D)$ where $r$ is the rank of $A$ and $D=\det\,A_{[\calR,\calC]}$.
For the simplicity of notation, we will assume that $\calR=\calC=\{1,\ldots,r\}$.
Then, it holds that any column of $A$ with index $j>r$ is a linear combination of the first $r$ columns of $A$.
Thus, the system $Ax=b$ has a solution $x\in\Re^n$ if and only if $A_{[:,1:r]}y=b$ has a solution $y\in\Re^r$.

Since $D\neq0$, the system $A_{[1:r,1:r]}y=b_{[1:r]}$ has a unique solution $y\in\Re^r$.
Moreover, this solution can be computed in polynomial time, using Cramer's rule (see, e.g., \cite[\S0.8.3]{horn2013matrix}) and the algorithm of Proposition \ref{pro-compute-determinant}: we let $y_0=D$ and
\[
y_j=\det(A_{[1:r,1:r]}\!\xleftarrow{j}\!b_{[1:r]}),\quad j\in\{1,\ldots,r\},
\]
where $A_{[1:r,1:r]}\xleftarrow{j}\allowbreak b_{[1:r]}$ is the matrix $A_{[1:r,1:r]}$ with its $j$th column replaced by $b_{[1:r]}$; then $y=[y_1,\ldots,y_r]/y_0$ is the unique solution of $A_{[1:r,1:r]}y=b_{[1:r]}$.
Hence, if $A_{[:,1:r]}y=b$, then the integers $x_0,x_1,\ldots,x_n$, defined by $x_i=y_i$ if $i\leq r$ and $x_i=0$ otherwise, satisfy the assertions of the corollary.
Otherwise (if $A_{[:,1:r]}y\neq b$), it means that the system $Ax=b$ has no solution in $\Re^n$.

In total, we have computed $r$ determinants and one matrix-vector multiplication.
Since each operation can be computed in time polynomial time w.r.t.\ $\bitsize(A)+\bitsize(b)$ (see Proposition \ref{pro-compute-determinant} for the determinant; the case of matrix-vector multiplication is trivial) and since $r\leq\min(m,n)\leq\bitsize(A)$, we have that the total complexity of the algorithm is polynomial in $\bitsize(A)+\bitsize(b)$, which concludes the proof.\hspace*{\fill}~\QED

\subsection{Proof of Theorem \ref{thm-roots-minimal-polynomial-boundedness-cont}}\label{ssec-thm-roots-minimal-polynomial-boundedness-cont-proof}
The proof relies on the following well-known property of the minimal polynomial (see, e.g., \cite[Theorem 3.3.6]{horn2013matrix}).

\underline{\smash{Link between m.p. and eigenvalues:}}
Let $A\in\Re^\nxn$.
The minimal polynomial of $A$ is equal to $x\mapsto\prod_{s=1}^m(x-\lambda_s)^{n_s}$, where $\lambda_1,\ldots,\allowbreak\lambda_m$ are the distinct eigenvalues of $A$ and $n_s$ is the size of the largest Jordan block associated to $\lambda_s$ in the Jordan canonical form of $A$.\hfill~$\triangleleft$

Then, using the above property, the conclusion of Theorem \ref{thm-roots-minimal-polynomial-boundedness-cont} follows from the expression of the exponential of a Jordan block $J_n(\lambda)$: $\euler^{J_n(\lambda)t}=\euler^{\lambda t}\sum_{k=0}^{n-1}\frac{1}{k!}(J_n(0)t)^k$.\hspace*{\fill}~\QED

\subsection{Proof of Lemma \ref{lem-gcd-euclidean-algorithm}}\label{ssec-lem-gcd-euclidean-algorithm-proof}

It is clear that $p_m$ is a GCD of $p_m$ and $p_{m+1}$.
For a proof by contradiction, assume that the statement of the lemma is false and let $k$ be the largest integer such that $p_m$ is not a GCD of $p_k$ and $p_{k+1}$.
Since $p_k=q_{k+1}p_{k+1}-p_{k+2}$ and $p_m$ is a GCD of $p_{k+1}$ and $p_{k+2}$, it holds that $p_m$ divides $p_k$.
Thus, $p_m$ is a common divisor of $p_k$ and $p_{k+1}$.
On the other hand, for the same reason, any common divisor of $p_k$ and $p_{k+1}$ will also divide $p_{k+2}$, and thus it will be a common divisor of $p_{k+1}$ and $p_{k+2}$ so that it will also divide their GCD $p_m$.
Hence, $p_m$ is a GCD of $p_k$ and $p_{k+1}$, a contradiction with the definition of $k$, concluding the proof.\hspace*{\fill}~\QED

\subsection{Proof of Lemma \ref{lem-routh-hurwitz-regular}}\label{ssec-lem-routh-hurwitz-regular-proof}

Let $p_0,\ldots,\allowbreak p_{m+1}$ be as in the statement of the lemma and for each $k\in\{0,\ldots,m+1\}$, define $\pred_k=p_k/p_m$.
Let $\Vred(y)$ be the number of variations of sign in the sequence $\pred_0(y),\allowbreak\ldots,\allowbreak\pred_m(y)$.

\begin{lemma}\label{lem-variations-reduced}
Let $p_0,\ldots,\allowbreak p_{m+1}$ and $\pred_0,\allowbreak\ldots,\allowbreak\pred_{m+1}$ be as above.
It holds that $V(+\infty)-V(-\infty)=\Vred(+\infty)-\Vred(-\infty)$.
\end{lemma}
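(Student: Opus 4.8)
The plan is to exploit the divisibility structure furnished by Lemma \ref{lem-gcd-euclidean-algorithm}. By that lemma, $p_m$ is a GCD of $p_k$ and $p_{k+1}$ for every $k\in\{0,\ldots,m\}$; in particular $p_m$ divides $p_k$, so each $\pred_k=p_k/p_m$ is a genuine \emph{polynomial}. Moreover $\pred_k\not\equiv0$ for $k\in\{0,\ldots,m\}$, since $p_k\not\equiv0$ there and $p_m\not\equiv0$. Hence, evaluating at $y\in\{+\infty,-\infty\}$, every entry of the sequence $\pred_0(y),\ldots,\pred_m(y)$ is a well-defined nonzero sign ($\pm\infty$, or a nonzero constant if $\pred_k$ is constant), so $\Vred(+\infty)$ and $\Vred(-\infty)$ are meaningful quantities.

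Next I would compare, at each endpoint, the sign of $p_k$ with the sign of $\pred_k$. Writing $p_k=p_m\cdot\pred_k$ and using $\deg\,p_k=\deg\,p_m+\deg\,\pred_k$ together with the multiplicativity of leading coefficients, one gets $\mathrm{sign}\,p_k(+\infty)=\mathrm{sign}\,p_m(+\infty)\cdot\mathrm{sign}\,\pred_k(+\infty)$; the same identity holds at $-\infty$, because the extra factor $(-1)^{\deg\,p_k}$ that governs the sign at $-\infty$ factors as $(-1)^{\deg\,p_m}(-1)^{\deg\,\pred_k}$. Therefore, at either endpoint $y\in\{+\infty,-\infty\}$, the sign sequence $(\mathrm{sign}\,p_0(y),\ldots,\mathrm{sign}\,p_m(y))$ is obtained from $(\mathrm{sign}\,\pred_0(y),\ldots,\mathrm{sign}\,\pred_m(y))$ by multiplying every entry by one and the same fixed sign $\epsilon_y\eqdef\mathrm{sign}\,p_m(y)\in\{-1,+1\}$.

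Finally I would invoke the elementary fact that multiplying every term of a finite sequence of nonzero reals (or $\pm\infty$) by a fixed nonzero constant leaves the number of variations of sign unchanged: a pair of consecutive terms has opposite signs before the scaling if and only if it does afterwards. Applying this with the constant $\epsilon_y$ gives $V(+\infty)=\Vred(+\infty)$ and $V(-\infty)=\Vred(-\infty)$; subtracting these two equalities yields $V(+\infty)-V(-\infty)=\Vred(+\infty)-\Vred(-\infty)$, which is the claim.

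There is no real obstacle here: the only points needing care are that $\pred_k$ is an honest polynomial (which is precisely the content of Lemma \ref{lem-gcd-euclidean-algorithm}, hence already available) and the bookkeeping at $\pm\infty$, which is handled by comparing leading coefficients and degree parities. If one prefers to avoid reasoning at $\pm\infty$ directly, an equivalent route is to pick a real $Y>0$ exceeding all real roots of $p_m$ (equivalently, of every $p_k$), note that $p_m$ has a constant nonzero sign on $[Y,+\infty)$ and on $(-\infty,-Y]$, and run the identical scaling argument at the finite points $y=\pm Y$, since $V$ and $\Vred$ are constant there and equal to their values at $\pm\infty$. I would present whichever of the two formulations is shorter.
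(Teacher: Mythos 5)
Your proposal is correct and follows essentially the same route as the paper: both arguments observe that $p_k=p_m\pred_k$, so at $y=\pm\infty$ each entry of the sequence is rescaled by the common nonzero sign of $p_m(y)$, which leaves the number of sign variations unchanged. Your write-up merely spells out the justification at $\pm\infty$ (leading coefficients and degree parities) in more detail than the paper does.
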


\begin{proof}
For any $y\in\Re\cup\{\pm\infty\}$ such that $p_m(y)\neq0$, multiplying each $\pred_k(y)$ ($k\in\{1,\ldots,m\}$) by $p_m(y)$ does not change the number of variations of sign in the sequence $\pred_0(y),\ldots,\pred_m(y)$: i.e., $V(y)=\Vred(y)$.
In particular, it holds that $V(+\infty)-V(-\infty)=\Vred(+\infty)-\Vred(-\infty)$.
\end{proof}

\begin{lemma}\label{lem-variation-argument-roots}
Let $p$ be a (real or complex) polynomial with no root on the imaginary axis.
Then,
\[
\frac1\pi\int_{-\infty}^{\infty}\frac{\diff}{\diff y}\arg p(iy)\,\diff y=\tau_s-\tau_u,
\]
where $\int_{-\infty}^{\infty}\frac{\diff}{\diff y}\arg f(y)\,\diff y$ is the variation of the argument of $f:\Re\to\Co$ when $y$ goes from $-\infty$ to $+\infty$ (on the real line), and $\tau_s$ and $\tau_u$ are as in Lemma \ref{lem-routh-hurwitz-regular}.
\end{lemma}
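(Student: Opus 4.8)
The plan is to read the left-hand side as the total variation of the argument of $p$ along the imaginary axis and to evaluate it root by root, via a partial-fraction expansion. First I would check that the integrand makes sense: since $p$ has no root on the imaginary axis, $p(iy)\neq0$ for every $y\in\Re$, so $\frac{\diff}{\diff y}\arg p(iy)=\imagp\!\big(ip'(iy)/p(iy)\big)$ is a well-defined smooth function of $y$ (independent of any branch choice), and the integral is absolutely convergent because $p(iy)\sim a_0(iy)^n$ as $|y|\to\infty$, where $a_0$ is the leading coefficient and $n=\deg p$, so the integrand decays like $y^{-2}$.

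The main step is to factor $p(x)=a_0\prod_{j=1}^n(x-r_j)$ over $\Co$ and use the logarithmic-derivative identity $p'(x)/p(x)=\sum_{j=1}^n (x-r_j)^{-1}$, which turns the integrand into a finite sum $\frac{\diff}{\diff y}\arg p(iy)=\sum_{j=1}^n\imagp\!\big(i/(iy-r_j)\big)$; since each summand is absolutely integrable, the integral splits termwise. For a single root $r=\alpha+i\beta$ with $\alpha\neq0$, an elementary computation gives $\imagp\!\big(i/(iy-r)\big)=-\alpha/\big(\alpha^2+(y-\beta)^2\big)$, whose integral over $\Re$ equals $-\pi\,\mathrm{sign}(\alpha)$. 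Hence every root with $\realp r<0$ contributes $+\pi$, every root with $\realp r>0$ contributes $-\pi$, and the constant factor $a_0$ contributes $0$; summing and dividing by $\pi$ gives exactly $\tau_s-\tau_u$.

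For intuition, this is the usual argument computation behind the Routh--Hurwitz method: as $y$ increases from $-\infty$ to $+\infty$, the point $iy-r=-\alpha+i(y-\beta)$ traverses the vertical line $\{\realp z=-\alpha\}$ upward, and its argument turns by $+\pi$ if that line lies in the open right half-plane ($\realp r<0$) and by $-\pi$ if it lies in the open left half-plane ($\realp r>0$). I expect the only subtlety to be keeping the sign of the left-half-plane contribution straight and justifying that the integral may be evaluated termwise — both of which are handled cleanly by the explicit elementary integral above, so no delicate branch-tracking of $\arg$ is actually needed.
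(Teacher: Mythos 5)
Your proof is correct and follows essentially the same route as the paper's: both decompose the variation of the argument root by root and show that each root off the imaginary axis contributes $\pm\pi$ according to the sign of its real part (your termwise integrand $-\alpha/(\alpha^2+(y-\beta)^2)$ is exactly the derivative of the paper's $\arctan((\beta-y)/\alpha)$ term). Your use of the logarithmic derivative rather than summing arctangents is a minor presentational variant that sidesteps branch-tracking, but the computation is the same.
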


\begin{proof}
First, consider a degree-one polynomial $p:x\mapsto(x-\alpha-\beta i)$ with $\alpha\in\Re_{\neq0}$ and $\beta\in\Re$.
Then, the argument of $p(iy)$ is equal to $\arctan((\beta-y)/\alpha)+\nu\pi$ ($\nu\in\ZZb$).
Hence, for a general polynomial $p:x\mapsto(x-\lambda_1)\cdots(x-\lambda_d)$, with $\lambda_s=\alpha_s+\beta_s i$ for each $s\in\{1,\ldots,d\}$, the argument of $p(iy)$ is equal to $\sum_{s=1}^d\arctan((\beta_s-y)/\alpha_s)+\nu\pi$ ($\nu\in\ZZb$).
We readily check (by differentiating and integrating) that each term with $\alpha_s<0$ contributes to a variation of the argument equal to $\pi$ and each term with $\alpha_s>0$ contributes to a variation of the argument equal to $-\pi$.
\end{proof}


Using the above lemmas, we now conclude the proof of Lemma \ref{lem-routh-hurwitz-regular}.

Note that $p$ has no root on the imaginary axis since $\pred_0$ and $\pred_1$ have no root in common (they have $1$ as GCD).
We use Lemma \ref{lem-variation-argument-roots} and look at the variation of the argument of $p(iy)$ when $y$ varies from $-\infty$ to $+\infty$.
Since $\deg\,\pred_1<\deg\,\pred_0$, the argument of $p(iy)$ converges to a multiple of $\pi$ when $y\to\pm\infty$.
Hence, the variation of the argument, divided by $\pi$, is equal to the number of times $p(iy)$ crosses the imaginary axis in \emph{positive sense} (see Figure \ref{fig-sense-crossing}) minus the number of times it crosses it in \emph{negative sense} (see Figure \ref{fig-sense-crossing}), when $y$ varies from $-\infty$ to $+\infty$.

The first situation (positive crossing) occurs for values of $y$ such that $\pred_0(y)=0$ and $\pred_0\pred_1$ goes from positive to negative at $y$ (note that $\pred_1(y)\neq0$ since $\pred_0$ and $\pred_1$ have no root in common).
If $\pred_0\pred_1$ goes from positive to negative, then this adds one variation of sign at the beginning of the sequence $\pred_0(y),\ldots,\pred_m(y)$: i.e., $\Vred(y^+)=\Vred(y^-)+1$.

The second situation (negative crossing) occurs for values of $y$ such that $\pred_0(y)=0$ and $\pred_0\pred_1$ goes from negative to positive at $y$ (again, $\pred_1(y)\neq0$ since $\pred_0$ and $\pred_1$ have no root in common).
This subtracts one variation of sign at the beginning of $\pred_0(y),\ldots,\pred_m(y)$: i.e., $\Vred(z^+)=\Vred(z^-)-1$.

Now, if $\pred_k(y)=0$ for some $k\in\{1,\ldots,m\}$ and some $y\in\Re$, then $\pred_{k-1}(y)$ and $\pred_{k+1}(y)$ are nonzero (since $\pred_{k\pm1}$ and $\pred_k$ have no root in common) and have opposite signs (since $p_{k+1}(y)=q_k(y)p_k(y)-p_{k-1}(y)=-p_{k-1}(y)$), so that a change of sign of $\pred_k$ at $y$ would not affect the number of sign variations in $\pred_0(y),\ldots,\pred_m(y)$: i.e., $\Vred(y^+)=\Vred(y^-)$.

Putting things together: if $y$ varies from $-\infty$ to $+\infty$, we get that $\Vred(+\infty)-\Vred(-\infty)$ is equal to the variation of the argument of $p(iy)$ when $y$ goes from $-\infty$ to $+\infty$.
Thus, we obtain the conclusion using Lemmas \ref{lem-variation-argument-roots} and \ref{lem-variations-reduced}.\hspace*{\fill}~\QED

\begin{figure}
\centering
\includegraphics[width=0.5\linewidth]{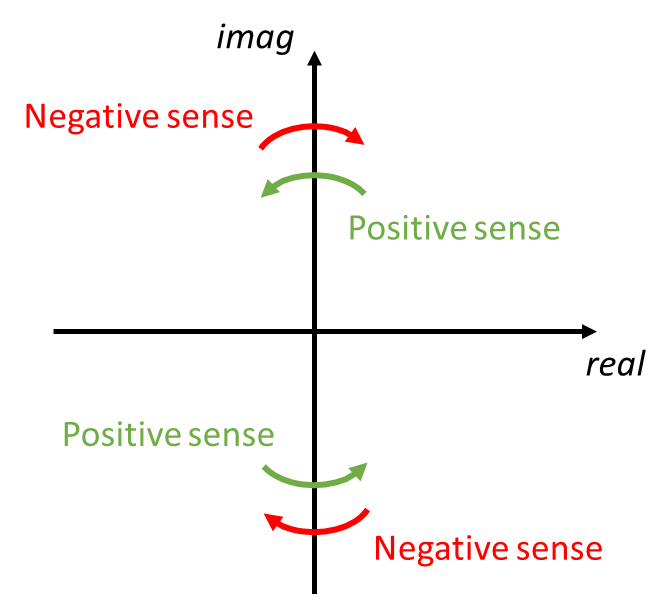}
\caption{Positive and negative sense for crossing the imaginary axis.}
\label{fig-sense-crossing}
\end{figure}

\subsection{Proof of Lemma \ref{lem-sturm-real-roots}}\label{ssec-lem-sturm-real-roots-proof}

We use Lemma \ref{lem-variations-reduced} and look at the variations of sign in the sequence $\pred_0(y),\ldots,\pred_m(y)$.

The hypothesis that $p_0'=-p_1^{}$ implies that if $\lambda$ is a root of $p_0$ with multiplicity $m>1$, then $\lambda$ is a root of $p_1$ with multiplicity $m-1$.
Thus, since $p_m$ is a GCD of $p_0$ and $p_1$ (Lemma \ref{lem-gcd-euclidean-algorithm}), $\lambda$ is also a root of $p_m$ with multiplicity $m-1$.
Hence, the roots of $\pred_0$ are simple and correspond to the distinct roots of $p_0$.

The rest of the proof is similar to the one of Lemma \ref{lem-routh-hurwitz-regular}: we look at the changes in the number of variations of sign in the sequence $\pred_0(y),\allowbreak\ldots,\allowbreak\pred_m(y)$ when $y$ goes from $-\infty$ to $+\infty$.
In particular, the number of sign variations in the sequence changes only when $y$ crosses a real root of $p_0$.
Indeed, in this case, $\pred_0(y)=0$ and $\pred_0^{}\pred_1^{}$ goes from positive to negative at $y$ (since $(\pred_0\pred_1)'(y)<0$).
This adds one variation of sign at the beginning of the sequence $\pred_0(y),\ldots,\pred_m(y)$: i.e., $\Vred(y^+)=\Vred(y^-)+1$.
For other values of $y\in\Re$ such that $\pred_k(y)=0$ for some $k\in\{1,\ldots,m\}$, it holds that $\pred_{k-1}(y)$ and $\pred_{k+1}(y)$ are nonzero and have opposite signs (same as in the proof of Lemma \ref{lem-routh-hurwitz-regular}), so that it does not affect the number of sign variations in $\pred_0(y),\ldots,\pred_m(y)$: i.e., $\Vred(y^+)=\Vred(y^-)$.

Putting things together: we get that $\Vred(+\infty)-\Vred(-\infty)$ is equal to the number of roots of $\pred_0$ on the real line, i.e., to the number of distinct real roots of $p_0$.\hspace*{\fill}~\QED

\subsection{Proof of Lemma \ref{lem-routh-hurwitz-boundedness}}\label{ssec-lem-routh-hurwitz-boundedness-proof}

Let $P=p$ if $\deg\,p$ is even and $P=ip$ if $\deg\,p$ is odd.
In both cases, $P$ and $p$ have the same roots.
By definition of $p_0$ and $p_1$, it holds that $P(x)=p_0(-ix)+ip_1(-ix)$ for all $x\in\Co$.
Thus, letting $\pred_0=p_0/p_m$ and $\pred_1=p_1/p_m$, and since $p_m=\pext_0$, it also holds that
\[
P(x)=(\pred_0^{}(-ix)+i\pred_1^{}(-ix))\pext_0(-ix)
\]
for all $x\in\Co$.
Hence, the roots of $P(x)$ consist of the roots of $P_1:x\mapsto\pred_0(-ix)+i\pred_1(-ix)$ and of $P_2:x\mapsto\pext_0(-ix)$.

Let us look at the first polynomial $P_1$.
By Lemma \ref{lem-routh-hurwitz-regular}, we get that $P_1$ has no root on the imaginary axis, and that the number of its roots with negative real part minus the number of its roots with positive real part is equal to $V(+\infty)-V(-\infty)$.
Thus, $P_1$ has only roots with negative real part if and only if $V(+\infty)-V(-\infty)=\deg\,\pred_0=\deg\,p_0-\deg\,\pext_0$, otherwise $P_1$ (and thus $P$) has at least one root with positive real part.

Next, let us look at the polynomial $P_2$.
From Lemma \ref{lem-sturm-real-roots}, we have that $\pext_0$ has only simple real roots if and only if $\Vext(+\infty)-\Vext(-\infty)=\deg\,\pext_0$, otherwise $\pext_0$ has a at least one multiple real root or a (multiple or not) complex root.
Note that since $\pext_0$ is real, any complex root $\alpha+\beta i$ of $\pext_0$ gives rise to another root $\alpha-\beta i$ (complex conjugate).
Thus, the conclusions for $P_2$ are the following: $P_2$ has only simple roots that are all on the imaginary axis if and only if $\Vext(+\infty)-\Vext(-\infty)=\deg\,\pext_0$.
Otherwise $P_2$ (and thus $P$) has either at least one multiple root on the imaginary axis or at least one root with positive real part.

Hence, the roots of $P$ (and thus those of $p$ too) satisfy (i) or (ii) in Theorem \ref{thm-roots-minimal-polynomial-boundedness-cont} if and only if $V(+\infty)-V(-\infty)=\deg\,p_0-\deg\,\pext_0$ and $\Vext(+\infty)\allowbreak-\Vext(-\infty)=\deg\,\pext_0$, concluding the proof of the lemma.\hspace*{\fill}~\QED

\subsection{Proof of Corollary \ref{cor-routh-hurwitz-boundedness-degree-dec}}\label{ssec-cor-routh-hurwitz-boundedness-degree-dec-proof}

If $\deg\,p_k<\deg\,p_{k-1}-1$ for some $k\in\{1,\allowbreak\ldots,\allowbreak m\}$, then $m<\deg\,p_0-\deg\,p_m$, so that $V(+\infty)-V(-\infty)<\deg\,p_0\allowbreak-\deg\,\pext_0$.
Similarly, if $\deg\,\pext_k<\deg\,\pext_{k-1}-1$ for some $k\in\{1,\allowbreak\ldots,\allowbreak\mext\}$, then $\mext<\deg\,\pext_0$, so that $\Vext(+\infty)\allowbreak-\Vext(-\infty)<\deg\,\pext_0$.
Thus, if one of the two situations occurs, then \eqref{eq-sum-variations} does not hold, concluding the proof.\hspace*{\fill}~\QED

\subsection{Proof of Theorem \ref{thm-roots-minimal-polynomial-boundedness-disc}}\label{ssec-thm-roots-minimal-polynomial-boundedness-disc-proof}

Similar to the proof of Theorem \ref{thm-roots-minimal-polynomial-boundedness-cont}.
The only difference is the expression of the \emph{power} of a Jordan block $J_n(\lambda)$: $(J_n(\lambda))^t=\sum_{k=0}^{n-1}\binom{t}{k}\lambda^{t-k}(J_n(0))^k$, which grows faster that $\lambda^t$ if $n>1$ (via the binomial coefficient $\binom{t}{k}$).\hspace*{\fill}~\QED

\subsection{Proof of Theorem \ref{thm-analyze-roots-polynomial-disc}}\label{ssec-thm-analyze-roots-polynomial-disc-proof}

The computation of the polynomial $p$ relies on the function $f:x\mapsto\frac{x+1}{x-1}$, called a \emph{M\"obius trans\-formation} and which has the following properties:

\underline{\smash{Properties of $f$:}}
The function $f$ defined above is bijective between $\Co_{\neq1}$ and $\Co_{\neq1}$, and is its own inverse (i.e., $f\circ f=\mathrm{id}$).
Moreover, for any $x\in\Co_{\neq1}$, it holds that $x$ has negative real part if and only if $\lvert f(x)\rvert<1$; and $x$ is on the imaginary axis if and only if $\lvert f(x)\rvert=1$.\hfill~$\triangleleft$

Define the polynomial $P:x\mapsto\sum_{\ell=0}^d a_\ell(x+1)^{d-\ell}(x-1)^\ell$, which can be computed in polynomial time by expanding the factors $(x\pm1)^\ell$ ($\ell\in\{0,\ldots,d\}$) and re\-arranging the terms (see also the discussion on the bit size of $A^\ell$ in the proof of Proposition \ref{thm-compute-minimal-polynomial}).
It is readily checked that $P$ coincides with $x\mapsto\pdisc(\frac{x+1}{x-1})(x-1)^d$ on $\Co_{\neq1}$.

Hence, if $\pdisc(x)=a_0(x-1)^{n_0}(x-\lambda_1)^{n_1}\cdots(x-\lambda_m)^{n_m}$ for all $x\in\Co$ where $n_0\in\ZZb_{\geq0}$ and $n_1,\ldots,n_m\in\ZZb_{>0}$, then it holds that $P(x)=a_02^{n_0}\prod_{s=1}^m(x+1-\lambda_s(x-1))^{n_s}$ for all $x\in\Co$.

Hence, the function $f$ is a one-to-one mapping between the roots (\emph{with multiplicity}) of $\pdisc$ in $\Co_{\neq1}$ and the roots (\emph{with multiplicity}) of $P$ in $\Co_{\neq1}$.
Moreover, the ``degree difference'' $\delta\coloneqq\deg\,\pdisc-\deg\,P$ corresponds to the multiplicity ($n_0$) of $1$ as a root of $\pdisc$.
Hence, letting $p:x\mapsto (x-1)^\delta P(x)$, we get the desired output of the algorithm.\hspace*{\fill}~\QED



}

\end{document}